\newcommand{\softO}{\widetilde{O}}
\newcommand{\F}{\mathbb{F}}
\newcommand{\C}{\mathcal C}
\DeclareMathOperator{\Q}{\mathbb{Q}}
\DeclareMathOperator{\Z}{\mathbb{Z}}
\DeclareMathOperator{\Frac}{\mathrm{Frac}}
\DeclareMathOperator{\Disc}{\mathrm{Disc}}
\DeclareMathOperator{\tr}{\mathrm{tr}}
\newcounter{countproblem}
\begin{document}\date{\vspace{-5ex}}
\title{On the complexity of computing integral bases of function fields} 
\author{Simon Abelard\inst{1}}
\institute{Laboratoire d'informatique de l'{\'E}cole polytechnique (LIX, UMR 7161)\\ 
CNRS, Institut Polytechnique de Paris \\
\email{abelard@lix.polytechnique.fr} }
\maketitle
\begin{abstract}
  Let $\C$ be a plane curve given by an equation $f(x,y)=0$ with $f\in K[x][y]$
  a monic squarefree polynomial. We study the problem of computing an integral
  basis of the algebraic function field $K(\C)$ and give new complexity bounds
  for three known algorithms dealing with this problem. For each algorithm, we
  study its subroutines and, when it is possible, we modify or replace them so
  as to take advantage of faster primitives. Then, we combine complexity
  results to derive an overall complexity estimate for each algorithm. In
  particular, we modify an algorithm due to B\"ohm et al. and achieve a
  quasi-optimal runtime. 

\end{abstract}

\paragraph*{Acknowledgements.}
Part of this work was completed while the author was at the Symbolic
Computation Group of the University of Waterloo. This paper is part of a
project that has received funding from the French Agence de l'Innovation de
D\'efense. The author is grateful to Gr\'egoire Lecerf, Adrien Poteaux and
\'Eric Schost for helpful discussions and to Gr\'egoire
Lecerf for feedback on a preliminary version of this paper.

\section{Introduction}

When handling algebraic function fields, it is often helpful --if not
necessary-- to know an integral basis. Computing such bases has a wide range of
applications from symbolic integration to algorithmic number theory and applied
algebraic geometry. It is the function field analogue of a well-known and
difficult problem: computing rings of integers in number fields and, as often,
the function field version is easier: the algorithm of Zassenhaus~\cite{zas67}
described for number fields in the late 60's can indeed be turned into a
polynomial-time algorithm for function fields which was later precisely
described by Trager~\cite{trager}. 

However, there are very few complexity results going further than just stating
a polynomial runtime. Consequently, most of the existing algorithms in the
literature are compared based on their runtimes on a few examples and this
yields no consensus on which algorithm to use given an instance of the problem.
In this paper, we provide complexity bounds for three of the best-known
algorithms to compute integral bases and provide complexity bounds based on
state-of-the art results for the primitives they rely on. 

In this paper, we focus on the case of plane curves $\C$ given by equations of
the form $f(x,y)=0$ with $f\in K[x][y]$ monic in $y$ and squarefree. We set the
notation $n=\deg_y f$ and $d_x=\deg_x f$. The associated function field is
$K(\C)=\Frac\left(K(x)[y]/f(x,y)\right)$, it is an algebraic extension of
degree $n$ of $K(x)$. An element $h(x,y)$ of $K(\C)$ is integral (over $K[x]$)
if there exists a monic bivariate polynomial $P(x,y)$ such that $P(x,h(x,y))$
equals $0$ in $K(\C)$. The set of such elements forms a free $K[x]$-module of
rank $n$ and a basis of this module is called an integral basis of $K(\C)$.

Computing integral bases of algebraic function fields has applications in
symbolic integration~\cite{trager} but more generally an integral basis can be
useful to handle function fields. For instance, the algorithm of van Hoeij and
Novocin~\cite{novocin} uses such a basis to ``reduce'' the equation of function
fields and thus makes them easier to handle. The algorithm of Hess~\cite{hess}
to compute Riemann-Roch spaces is based on the assumption that integral
closures have been precomputed. This assumption is sufficient to establish a
polynomial runtime, but a more precise complexity estimate for Hess' approach
requires to assess the cost of computing integral closures as well.

\paragraph*{Our contribution.}
We provide complexity estimates for three algorithms dedicated to computing
integral bases of algebraic function fields in characteristic $0$ or greater
than $n$. To the best of our knowledge, no previous bounds were given for these
algorithms. Another approach which has received a lot of attention is the use
of Montes' algorithm. We do not tackle this approach in the present paper, a
complexity estimate has been given by Bauch in~\cite[Lemma~3.10]{bauch16} in
the case of number fields.  Using the Montes algorithm, a local integral
basis of a Dedeking domain $A$ at a prime ideal $\mathfrak{p}$ is computed in
$O\left( n^{1+\varepsilon}\delta\log
q+n^{1+\varepsilon}\delta^{2+\varepsilon}+n^{2+\varepsilon}\delta^{1+\varepsilon}\right)
$ $\mathfrak{p}$-small operations, with $\delta$ the $\mathfrak{p}$-valuation
of $\Disc(f)$ and the cardinal of $A/\mathfrak{p}$.

Our contribution is actually not limited to a complexity analysis: the
algorithms that we present have been slightly modified so that we could
establish better complexity results. We also discuss possible improvements to
van Hoeij's algorithm in some particular cases that are not uncommon in the
literature. Our main complexity results are Theorems~\ref{thm:cplxvh},
\ref{thm:cplxtrager} and~\ref{thm:bohmcplx}. Note that we count field
operations and do not take into account the coefficient growth in case of
infinite fields nor the field extensions incurred by the use of Puiseux series.
We also made the choice not to delve into probabilistic aspects: all the
algorithms presented here are ``at worst'' Las Vegas due to the use of Poteaux
and Weimann's algorithm, see for instance~\cite[Remark~3]{PoWe17}. 

We decided to give worst-case bounds and to only involve $n$ and
$\Disc(f)$ in our theorems so as to give ready-to-use results. Our proofs,
however, are meant to allow the interested reader to derive sharper bounds
involving more precise parameters such as the regularity and ramification
indices of Puiseux series.

We summarize these complexity estimates in Table~\ref{tab:cplx} in a simpler
context: we ignore the cost of factorizations and bound both $n$ and
$d_x=\deg_xf$ by $D$.  In this case, the input size is in $O(D^2)$ and output
size in $O(D^4)$. The constant $2\le\omega\le 3$ refers to a feasible exponent
for matrix multiplication, see~\cite{omega} for the smallest value currently
known. Translating the above bound, the complexity of the Montes approach is at
best in $\softO(D^5)$ but only for a computing a local integral basis at one
singularity, while the algorithm detailed in Section~\ref{sec:bohm} computes a
global integral basis for a quasi-optimal arithmetic complexity (i.e. in
$\softO(D^4)$).

\paragraph*{Organization of the paper.} 
We sequentially analyze the three algorithms: Section~\ref{sec:vh} is dedicated
to van Hoeij's algorithm~\cite{vh}, Section~\ref{sec:trager} to Trager's
algorithm~\cite{trager} and Section~\ref{sec:bohm} to an algorithm by B\"ohm et
al. introduced in~\cite{bohm}. In each section, we first give an overview of
the corresponding algorithm and insist on the parts where we perform some
modifications. The algorithms we describe are variations of the original
algorithms so we give no proof exactness and refer to the original papers in
which they were introduced. Then, we establish complexity bounds for each
algorithm by putting together results from various fields of computer algebra.
We were especially careful about how to handle linear algebra, Puiseux series
and factorization over $K\left[ [x] \right][y]$. 

\begin{table}[h]  
  \caption{\label{tab:cplx} Simplified complexity estimates for computing integral bases.}      \begin{center}     \renewcommand{\arraystretch}{1.5}         
    \begin{tabular}{|c|c|}               
      \hline 
      Algorithm & Worst-case complexity  \\      
      \hline                
      Trager's algorithm~\cite{trager} & $\softO(D^7) $  \\            
      Van Hoeij's algorithm~\cite{vh} & $\softO(D^{\omega+4})$ \\          
      B\"ohm et al.'s algorithm~\cite{bohm} & $\softO(D^4) $   \\    
      \hline       
    \end{tabular}    
  \end{center}
\end{table}

\section{Van Hoeij's algorithm}\label{sec:vh}

\subsection{Puiseux series}

We recall some basic concepts about Puiseux series and refer to~\cite{walker}
for more details. Assuming that the characteristic of $K$ is either $0$ or
$>n$, the Puiseux theorem states that $f\in K[x][y]$ has $n$ roots in the field
of Puiseux series $\cup_{e\ge 1}\overline{K}\left( (x^{1/e}) \right)$.

Following Duval~\cite{duval}, we group these roots into irreducible factors of
$f$. First, one can write $f=\prod_{i=1}^r f_i$ with each $f_i$ irreducible in
$K[ [x] ][y]$. Then, for $1\le i\le r$ we write
$f_i=\prod_{j=1}^{\varphi_i} f_{ij}$, where each $f_{ij}$ is irreducible in
$\overline{K}[ [x] ][y]$. Finally, for any $(i,j)\in\{1,\ldots
r\}\times\{1,\ldots,\varphi_i\}$ we write
\[f_{ij}=\prod_{k=0}^{e_i-1}\left(Y-S_{ij}(x^{1/{e_i}}\zeta_{e_i}^k)\right),\]
where $S_{ij}\in\overline{K}\left( (x) \right)$ and $\zeta_{e_i}$ is a
primitive $e_i$-th root of unity.

\begin{definition}
  The $n$ fractional Laurent series
  $S_{ijk}(x)=S_{ij}(x^{1/{e_i}}\zeta_{e_i}^k)$ are called the classical
  Puiseux series of $f$ above 0. The integer $e_i$ is called the ramification
  index of $S_{ijk}$.
\end{definition}

\begin{proposition}
  For a fixed $i$, the $f_{ij}$'s all have coefficients in $K_i$, a
  degree-$\varphi_i$ extension of $K$ and they are conjugated by the action of
  the associated Galois group. We have $\sum_{i=1}^{r}e_i\varphi_i=n$.
\end{proposition}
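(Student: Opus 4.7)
The plan is to exploit the coefficientwise action of $G=\mathrm{Gal}(\overline{K}/K)$ on $\overline{K}[[x]][y]$ (acting trivially on both $x$ and $y$). Since $f_i\in K[[x]][y]$ is fixed by $G$ and $\overline{K}[[x]][y]$ is a unique factorization domain, $G$ permutes the irreducible factors $\{f_{i1},\dots,f_{i\varphi_i}\}$ of $f_i$. I would first show that this permutation action is transitive: if there were a proper $G$-orbit $O\subsetneq\{f_{ij}\}_j$, then $g=\prod_{f\in O}f$ would be $G$-invariant and therefore lie in $K[[x]][y]$, yielding a nontrivial factor of $f_i$ over $K[[x]]$ and contradicting the irreducibility of $f_i$. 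This transitivity directly proves that the $f_{ij}$'s are conjugate under the Galois action.

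With transitivity in hand, the orbit--stabilizer theorem gives a stabilizer $H_i=\mathrm{Stab}_G(f_{i1})\le G$ of index $\varphi_i$. Its fixed field $K_i=\overline{K}^{H_i}$ is therefore a degree-$\varphi_i$ extension of $K$, and it is precisely the field of definition of $f_{i1}$; the other $f_{ij}$'s are then defined over the conjugates of $K_i$ under $G$ (these conjugates coincide when $K_i/K$ is Galois, and the statement of the proposition identifies all of them with $K_i$ by a harmless abuse of notation).

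The identity $\sum_{i=1}^r e_i\varphi_i=n$ is then a straightforward degree count. By the very definition
\[f_{ij}=\prod_{k=0}^{e_i-1}\bigl(Y-S_{ij}(x^{1/e_i}\zeta_{e_i}^k)\bigr),\]
each $f_{ij}$ has $y$-degree $e_i$, so $\deg_y f_i=e_i\varphi_i$. Summing over $i$ and using $\deg_y f=n$ yields the claimed relation.

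The main obstacle I anticipate is foundational rather than combinatorial: one has to justify that the coefficientwise $G$-action on $\overline{K}$ extends (continuously in the $x$-adic topology) to $\overline{K}[[x]]$, and that the factorization of $f_i$ in $\overline{K}[[x]][y]$ behaves well under it. Once this is set up, together with the fact that the distinct Puiseux roots of $f_i$ (coming from Hensel lifting applied to its squarefree reduction after a suitable change of variables) give rise to the distinct irreducible factors $f_{ij}$, both the orbit--stabilizer argument and the degree identity are immediate.
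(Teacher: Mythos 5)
The paper does not prove this proposition; it records it as background recalled from Duval's theory of rational Puiseux expansions (immediately after citing~\cite{duval}), so there is no ``paper's proof'' to compare against. Your argument is the standard Galois-descent proof of this fact and it is correct as far as the combinatorics go: $G=\mathrm{Gal}(\overline{K}/K)$ acts coefficientwise on $\overline{K}[[x]][y]$, $(\overline{K}[[x]][y])^G = K[[x]][y]$, $G$ permutes the $f_{ij}$'s, transitivity follows by taking the product over a would-be proper orbit and contradicting irreducibility of $f_i$, the stabilizer has index $\varphi_i$, and the degree count $\deg_y f_i = e_i\varphi_i$ is immediate from the displayed definition of $f_{ij}$.

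The one point you flag as ``foundational'' is the one that actually needs to be pinned down, and it is not automatic from the $G$-action alone: to know that $K_i = \overline{K}^{H_i}$ is a \emph{finite} (degree-$\varphi_i$) extension you need $H_i$ to be open in $G$, equivalently that each $f_{ij}$ has all of its power-series coefficients in a single finite extension of $K$. This does not follow from $G$-invariance considerations by themselves (in general a $G$-stable finite orbit of elements of $\overline{K}[[x]]$ need not consist of series with coefficients in a finite extension). The clean way to close this gap is precisely the Newton--Puiseux construction: the algorithm produces each branch $f_{ij}$ by finitely many algebraic operations (characteristic polygon slopes, roots of finitely many characteristic polynomials, Hensel lifting over the resulting residue field), so the residue field stabilizes after finitely many steps at a finite extension $K_i/K$ and $f_{ij}\in K_i[[x]][y]$. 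Once that is in hand, $H_i$ is open, the orbit--stabilizer computation gives $[K_i:K]=\varphi_i$, and the rest of your argument goes through verbatim. You may also want to note (as you implicitly do) that the last sentence of the proposition, taken literally, presumes $K_i/K$ Galois; in general the $f_{ij}$ are defined over the $G$-conjugates of $K_i$, which is what Duval's statement actually asserts.
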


\begin{definition}~\cite[Definition~2]{PoWe17}
  A system of rational Puiseux expansions over $K$ ($K$-RPE) of $f$ above $0$
is a set $\{R_i\}_{1\le i\le r}$ such that 
\begin{itemize} 
  \item[$\bullet$] $R_i(T)=(X_i(T),Y_i(T))\in K_i((T))^2$, 
  \item[$\bullet$] $R_i(T)=(\gamma_iT^{e_i},\sum_{j=n_i}^\infty \beta_{ij}T^j)$, 
    where $n_i\in\Z$, $\gamma_i\ne 0$ and $\beta_{in_i}\ne 0$, 
    \item[$\bullet$] $f_i(X_i(T),Y_i(T))=0$, 
    \item[$\bullet$] the integer $e_i$ is minimal. 
\end{itemize}

\end{definition}
      
In the above setting, we say that $R_i$ is centered at $(X_i(0),Y_i(0))$.  We
may have $Y_i(0)=\infty$ if $n_i<0$ but this cannot happen if $f$ is monic.

\begin{definition}~\cite[Definition~3]{PoWe17}
The regularity index of a Puiseux series $S$ of $f$ with ramification index $e$
is the smallest $N\ge \min(0,ev_x(S))$ such that no other Puiseux series $S'$
have the same truncation up to exponent $N/e$. The truncation of $S$ up to its
regularity index is called the singular part of $S$.
\end{definition}

It can be shown that two Puiseux series associated to the same RPE share the
same regularity index so we can extend this notion (and the notion of singular
part) to RPE's.

\subsection{Description of van Hoeij's algorithm}
We will be looking for an integral basis of the form $p_i(x,y)/d_i(x)$, where
the $p_i$ are degree-$i$ monic polynomials in $y$. It is known that the
irreducible factors of the denominators $d_i$ are among the irreducible factors
of the discriminant with multiplicity at least 2. We can treat these factors
one by one by first looking for local integral bases at each of these factors,
i.e. bases whose denominators can only be powers of such an irreducible factor.
A global integral basis is then recovered from these local bases by CRT.

To compute a local integral basis at a fixed factor $\phi$, van Hoeij~\cite{vh}
follows the following strategy. Starting from $(1,y,\cdots, y^{n-1})$ and
updating it so that it generates a larger module, until this module is the
integral closure.  This basis is modified by multiplying it by an appropriate
triangular matrix in the following way. Let us fix a $d$, then $b_d$ must be a
linear combination of the $b_0$, $\cdots$, $b_{d-1}$ such that
$(yb_{d-1}+\sum_{i=0}^{d-1} a_ib_i)/\phi^j$ is integral with $j$ as large as
possible.

To this end, the coefficients of the linear combination are first set to be
variables and we write equations enforcing the fact that the linear combination
divided by $\phi$ has to be integral. If a solution of this system is found,
the value of $b_d$ is updated and we repeat the process so as to divide by the
largest possible power of $\phi$. Note that a solution is necessary unique
otherwise the difference of two solutions would be an integral element with
numerator of degree $d-1$, which means that the $j$ computed in the previous
step was not maximal. When there is no solution, we have reached the maximal
exponent and move on to computing $b_{d+1}$. 

For the sake of completeness, we
give a description of van Hoeij's algorithm but we refer to van Hoeij's
original paper~\cite{vh} for a proof that this algorithm is correct. This
algorithm is originally described for fields of characteristic 0 but also works
in the case of positive characteristic provided that we avoid wild ramification
(see~\cite[Section~6.2.]{vh}). To deal with this issue, we make the assumption
that we are either considering characteristic zero or greater than $n$. 

\begin{algorithm}[ht]\label{algo:vh} 
  \Input{A monic irreducible polynomial $f(y)$ over $K[x]$}
  \Output{An integral basis for $K[x,y]/\langle f\rangle$}

  $n\gets \deg_y(f)$ \;
  $S_{fac}\gets$ set of factors $P$ such that $P^2|\Disc(f)$\;

  \For{$\phi$ in $S_{fac}$}{
    Compute $\alpha$ a root of $\phi$ (possibly in extension) \;
    Compute $r_i$ the $n$ Puiseux expansions of $f$ at $\alpha$ with precision $N$ \;
    $b_0\gets 1$ \;
    \For{$d\gets 1$ to $n-1$}{
      $b_d\gets yb_{d-1}$ \;
      solutionfound $\gets$ true \;
      Let $a_0$, $\cdots$ $a_{d-1}$ be variables \;
      $a \gets (b_d+\sum_{i=0}^{d-1} a_ib_i)/(x-\alpha)$ \;
      \While{solutionfound}{
	Write the equations, i.e. the coefficients of $a(r_i)$ with negative power of $(x-\alpha)$ for any $i$ \;
	Solve this linear system in the $a_i$'s \;
	\If{no solution}{solutionfound $\gets$ false\;
	\Else{There is a unique solution $(a_i)$ in $K(\alpha)^d$ \;
	  Substitute $\alpha$ by $x$ in each $a_i$ \;
	  $b_d \gets (b_d+\sum_{i=0}^{d-1} a_ib_i)/\phi$ \;
	}}
	}
      }
    }
    From all the local bases perform CRT to deduce $B$ an integral basis \;
 
    \Return{B} \;
    \caption{Van Hoeij's algorithm~\cite{vh}}
\end{algorithm}

\subsection{Complexity analysis}

In this section, we prove the following theorem.

\begin{theorem}\label{thm:cplxvh}
Let $f(x,y)$ be a degree-$n$ monic squarefree polynomial in $y$.
Algorithm~\ref{algo:vh} returns an integral basis for the corresponding
function field and costs the factorization of $\Disc(f)$ and
$\softO(n^{\omega+2}\deg\Disc(f))$ field operations, where $2\le\omega\le 3$ is
a feasible exponent for linear algebra.
\end{theorem}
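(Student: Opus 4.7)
The plan is to decompose the runtime of Algorithm~\ref{algo:vh} into four separately analyzable pieces: (i) the factorization of $\Disc(f)$, counted as input cost in the statement; (ii) for each squared factor $\phi$ of $\Disc(f)$, the computation of the Puiseux expansions of $f$ above a root of $\phi$; (iii) the double loop producing a local integral basis at $\phi$; and (iv) the final CRT recombination of local bases. Writing $d_\phi$ for the degree of $\phi$ and $v_\phi$ for its $\phi$-adic valuation in $\Disc(f)$, the global identity $\sum_\phi v_\phi d_\phi\le\deg\Disc(f)$ is what will inject the $\deg\Disc(f)$ factor into the final bound.

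For step (ii), I would first argue that an adequate truncation order is $N_\phi\in\softO(v_\phi)$, using that the regularity indices of the $n$ Puiseux series above $\phi$ are controlled by the $\phi$-adic valuation of $\Disc(f)$. Invoking the Poteaux--Weimann algorithm then yields all $n$ classical Puiseux series above $\phi$ in $\softO(n v_\phi d_\phi)$ operations over $K$ (the extra $d_\phi$ accounting for arithmetic in the residue field $K_\phi=K[x]/\phi$), which sums to $\softO(n\deg\Disc(f))$ across all factors.

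The heart of the analysis is the inner double loop. Each iteration of the while loop boils down to (a) evaluating the candidate polynomial $a$, of $y$-degree less than $n$ with coefficients in $K_\phi$, at the $n$ truncated Puiseux series, at a cost of $\softO(n^2 v_\phi d_\phi)$ $K$-operations, and (b) solving a linear system in at most $n$ unknowns over $K_\phi$, at cost $\softO(n^\omega d_\phi)$ $K$-operations. The key structural observation is that the total $\phi$-exponent acquired in the denominators of $b_0,\ldots,b_{n-1}$ equals the $\phi$-adic valuation of the conductor, which Dedekind's formula bounds by $v_\phi/2$ in our tame setting (characteristic $0$ or $>n$); adding one ``no-solution'' exit per value of $d$, the total number of while-loop executions at $\phi$ lies in $O(n+v_\phi)$.

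The main obstacle will be combining these per-iteration and per-factor bounds without overcounting, since a naive summation of the cross-term $n^2 v_\phi^2 d_\phi$ is too coarse. My plan is to amortize the iteration cost by observing that each successful step only peels off one factor of $\phi$, so one can incrementally update the evaluated Puiseux data rather than recompute it from scratch, and then to exploit the coupling $v_\phi d_\phi\le\deg\Disc(f)$ together with the coarse bound $v_\phi\le\deg\Disc(f)$ to absorb the quadratic-in-$v_\phi$ contribution into the final bound $\softO(n^{\omega+2}\deg\Disc(f))$. The CRT step (iv) reconstructs $n$ polynomials of $y$-degree less than $n$ modulo $\Disc(f)$ in $\softO(n\deg\Disc(f))$ operations, which is absorbed into the previous bound. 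Summing the four contributions then yields the theorem.
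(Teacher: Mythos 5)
Your decomposition matches the paper's (discriminant factorization, Puiseux expansions via Poteaux--Weimann at each squared factor, per-factor double loop, CRT recombination), and the iteration count $O(n+v_\phi)$ is correct; the paper obtains it without Dedekind's conductor--discriminant relation, simply by noting that each successful division removes $\phi^2$ from the current discriminant, so successes number at most $M(\phi)/2$, and there are at most $n$ failed exits. The genuine gap is the size of the linear system: you treat it as (at most) $n\times n$ over $K_\phi$, giving $\softO(n^\omega d_\phi)$ per solve, but the number of equations is the total count of Puiseux coefficients of exponent $<1$ across all $n$ expansions, which can be as large as $ne\le n^2$ where $e$ is the largest ramification index. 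Solving a $c\times d$ system with $c\le n^2$, $d\le n$ costs $\softO(cd^{\omega-1})\le\softO(n^{\omega+1})$ over $K_\phi$, hence $\softO(n^{\omega+1}\deg\phi)$ over $K$. This is exactly where the paper's final exponent $n^{\omega+2}$ comes from: $n$ failed iterations per factor times $\softO(n^{\omega+1}\deg\phi)$ summed against $\sum_\phi\deg\phi\le\deg\Disc(f)$. Your underestimate masks this and shifts the apparent bottleneck to the evaluation step.

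That shift then forces you into the ``amortization'' sketch, which does not actually close the gap. The incremental-update idea only removes a factor $n$ from the cost of forming the candidate series, not the factor $v_\phi$; the cross-term $\sum_\phi n^2 v_\phi^2 d_\phi$ that you flag as too coarse is bounded via $v_\phi d_\phi\le\deg\Disc(f)$ and $\sum_\phi v_\phi\le\deg\Disc(f)$ by $n^2(\deg\Disc(f))^2$, and this only fits into $\softO(n^{\omega+2}\deg\Disc(f))$ under the extra assumption $\deg\Disc(f)\le n^\omega$ (equivalently $d_x$ at most roughly $n^{\omega-1}$), which is not part of the hypotheses. The paper sidesteps all of this by keeping the evaluated Puiseux data in memory, bounding each series operation crudely by $\softO(n^3)$ field operations, and observing that the rectangular linear solve already dominates; it never needs to control $\sum_\phi v_\phi^2 d_\phi$. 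To repair your argument, replace the per-solve cost with the rectangular bound $\softO(n^{\omega+1}d_\phi)$ and then the theorem follows directly with the evaluation cost as a lower-order term.
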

\begin{proof}

  First, we need to compute the discriminant and recover its square factors,
  which costs a factorization of a univariate polynomial of degree $\le nd_x$.

Then, we need to compute the Puiseux expansions $\eta_i$ of $f$ at one root of
each factor in $S_{fac}$, up to precision $N=\max_i\sum_{i\ne j} v(\eta_i-\eta_j)$.
Using the algorithm of Poteaux and Weimann~\cite{PoWe17}, the Puiseux
expansions are computed up to precision $N$ in $\softO(n(\delta+N))$ field
operations, where $\delta$ stands for the valuation of $\Disc(f)$. Indeed,
these expansions are computed throughout their factorization algorithm, which
runs in $\softO(n(\delta+N))$ field operations as stated
in~\cite[Theorem~3]{PoWe17}. Therefore, in theory, we will see that computing
the Puiseux expansions has a negligible cost compared to other parts of the
algorithm since $N \le n^2$. 

Another problem coming from the use of Puiseux expansions is that we have to
evaluate bivariate polynomials (the $b_i$'s) at the Puiseux expansions of $f$.
However this matter can be dealt with by keeping them in memory and updating
them along the computations. This way, for a fixed $d$ we first initialize
$b_d=yb_{d-1}$ so we just have to perform a product of Puiseux expansions at
precision $O(n^2)$ and then each time $b_d$ is updated it will amount to
performing a linear combination of Puiseux expansions. Since we fix precision
at $N\le n^2$, taking into account the denominator in the exponents of the
Puiseux series this amounts to handling polynomials of degrees $\le n^3$. Thus,
in our case, arithmetic operations on Puiseux series can be performed in
$\softO(n^3)$ field operations.

The main task in this algorithm is to solve a linear system of $c$ equations in
$d$ variables over the extension $K(\alpha)$, where $c$ is the total number of
terms of degrees $<1$ in the $n$ Puiseux expansions. Since we know the linear
system must have at most one solution, we have the lower bound $c \ge d$ but in
the worst case, each Puiseux series has $n$ terms of degrees $<1$ and so $c$
can be bounded above by $n^2$. More precisely, we can bound it by $ne$, where
$e$ is the maximum of the ramification indices of the classical Puiseux
expansions of $f$.  

In most cases, this system will be rectangular of size $c\times d$ so we solve
it in time $\softO(cd^{\omega-1})$ using~\cite[Theorem~8.6]{aecf}. This step is
actually the bottleneck for each iteration and using the bounds on $d$ and $c$
it runs in $\softO(n^{\omega+1}\deg\phi)$ field operations, since the extension
$K(\alpha)$ of $K$ has degree $\le \deg\phi$.

This process is iterated over the irreducible factors of the discriminant
appearing with multiplicity at least 2, and for $\phi$ such a factor we have to
solve at most $n+M(\phi)/2$ systems, where $M(\phi)$ is the multiplicity of
$\phi$ in $\Disc(f)$. Indeed, each time a solution to a system is found the
discriminant is divided by $\phi^2$ so that cannot happen more than $M(\phi)/2$
times, but since we need to make sure that we have no solution before
incrementing $d$ we will have to handle $n$ additional systems. Thus, for a
fixed factor $\phi$ the cost of solving the systems is bounded by $O(n\cdot
n^{\omega+1}\deg\phi+n^{\omega+1}\deg\phi M(\phi))$, where the factor
$\deg\phi$ comes from the fact that the linear systems are solved over a degree
$\deg\phi$-extension of the base field.

Thus, the complexity is in
$\softO\left(\sum_{\phi\in
  S_{fac}}n^{\omega+1}M(\phi)\deg\phi+n^{\omega+2}\sum_{\phi\in
    S_{fac}}\deg\phi\right)$.

\end{proof} 

\begin{remark}
  If the base field is a finite field $\F_q$, factoring the discriminant is done in $\softO(
  (nd_x)^{1.5}\log q+nd_x(\log q)^2 )$ bit operations~\cite{KedlayaUmans}.
\end{remark}
\begin{remark}

The above formula shows how the size of the
input is unsufficient to give an accurate estimate of the runtime of van
Hoeij's algorithm.  Indeed, in the best possible case $\# S_{fac}$, $\deg\phi$ and
$M(\phi)$ might be constant, and all the $c_{\phi,i}$'s might be equal to $d$,
leading to an overall complexity in $O(n^{\omega+2})$. In the worst possible
case however, the sum $\sum_{\phi\in S_{fac}}\deg\phi$ is equal to the degree of the
discriminant, leading to an overall complexity in
$\softO(n^{\omega+2}\deg\Disc(f))$.  \end{remark}

\subsection{An improvement in the case of low-degree singularities}

Instead of incrementally computing the $b_i$'s, it is possible to compute one
$b_k$ by solving the exact same systems, except that this time the previous
$b_i$'s may not have been computed (and are thus set to their initial value
$y^i$). The apparent drawback of this strategy is that it computes $b_k$
without exploiting previous knowledge of smaller $b_i$'s and therefore leads to
solving more systems than using the previous approach. More precisely, if we
already know $b_{k-1}$ then we have to solve $e_k-e_{k-1}+1$ systems otherwise
we may have to solve up to $e_k+1$ systems. Using the complexity analysis
above, we can bound the complexity of finding a given $b_k$ without knowing
other $b_i$'s by $\softO(n^2k^{\omega-1}(e_k+1)\deg\phi)$. 

However, we know that for a fixed $\phi$, the $b_i$'s can be taken of the form
$p_i(x,y)/\phi^{e_i}$ where the exponents are non-decreasing and bounded by
$M(\phi)$. Therefore, when $M(\phi)$ is small enough compared to $n$, it makes
sense to pick a number $k$ and compute $b_k$. If $b_k=y^k$ then we know that
$b_i=y^i$ for any $i$ smaller than $k$. If $b_k=p_k(x,y)/\phi^{M(\phi)}$ then
we know that we can take $b_i=y^{i-k}b_k$ for $i$ greater than $k$. In most
cases neither of this will happen but then we can repeat the process
recursively and pick one number between $1$ and $k-1$ and another one between
$k+1$ and $n$ and repeat.

In the extreme case where we treat $M(\phi)$ as a constant (but $\deg\phi$ is
still allowed to be as large as $\deg(\Disc(f))/2$) this approach saves a
factor $\softO(n)$ compared to the iterative approach computing the $b_i$'s one
after another. This is summarized by the following proposition.

\begin{proposition}\label{prop:lowdegvh} 
Let $f(x,y)$ be a degree-$n$ monic squarefree polynomial in $y$ such that
irreducible factors of $\Disc(f)$ only appear with exponent bounded by an
absolute constant. The above modification of van Hoeij's algorithm returns
an integral basis for the corresponding function field and costs a univariate
factorization of degree $\le nd_x$ and $\softO(n^{\omega+1}\deg\Disc(f))$ field
operations, where $\omega$ is a feasible exponent for linear algebra. 
\end{proposition}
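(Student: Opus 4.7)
The plan is to combine the cost of an isolated computation of some $b_k$, as derived in the discussion preceding the proposition, with a divide-and-conquer argument bounding the number of such isolated computations per factor $\phi$.

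First I would recall that computing a single $b_k$ without prior knowledge of the other $b_i$'s costs $\softO(n^2k^{\omega-1}(e_k+1)\deg\phi)$ field operations. Under the hypothesis $M(\phi)=O(1)$, we have $e_k+1\le M(\phi)+1=O(1)$, and using $k\le n$ this simplifies to $\softO(n^{\omega+1}\deg\phi)$ per isolated computation.

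Next, for each irreducible factor $\phi\in S_{fac}$, I would analyse the divide-and-conquer strategy. Since $b_i=p_i(x,y)/\phi^{e_i}$ with a non-decreasing sequence of exponents $0=e_0\le e_1\le\cdots\le e_{n-1}\le M(\phi)$ taking at most $M(\phi)+1=O(1)$ distinct values, the index set $\{0,\ldots,n-1\}$ partitions into at most $M(\phi)+1$ maximal constant-exponent segments. The two deduction rules recalled in the proposition, namely $b_i=y^i$ when $e_i=0$ and $b_i=y^{i-k}b_k$ when the segment has saturated exponent, allow us to recover for free all entries of a segment once one element in it is known. The recursion therefore reduces to locating the $O(M(\phi))=O(1)$ segment boundaries. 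At each recursive call, the midpoint $b_k$ of the working interval is computed in isolation: if $e_k$ equals the minimum (resp. maximum) of the currently known exponent range, the interval is halved while the range is unchanged; otherwise, we split into two subintervals, each with a strictly smaller range of possible exponents. Since the range of exponents starts at size $O(1)$, the branching depth is bounded by an absolute constant, and between branchings the working interval is halved, so locating all segment boundaries costs $O(\log n)$ isolated $b_k$ computations per factor $\phi$.

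Multiplying the per-call cost by the number of calls yields $\softO(n^{\omega+1}\deg\phi)$ per factor. Summing over $\phi\in S_{fac}$ and using $\sum_{\phi\in S_{fac}}\deg\phi\le\deg\Disc(f)$, the total cost of this phase is $\softO(n^{\omega+1}\deg\Disc(f))$. Adding the cost of the single univariate factorization of $\Disc(f)$ of degree $\le nd_x$ gives the claimed complexity. The most delicate point is to convince oneself that the recursion branches only a constant number of times so that the total number of isolated $b_k$ computations stays polylogarithmic in $n$: this is precisely where the hypothesis that the multiplicities $M(\phi)$ are bounded by an absolute constant is used to save a factor $\softO(n)$ with respect to Theorem~\ref{thm:cplxvh}.
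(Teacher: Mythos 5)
Your proof is correct and follows essentially the same approach as the paper: exploit the monotonicity of the exponent sequence $e_0\le e_1\le\cdots\le e_{n-1}$ and the bound $e_i\le M(\phi)=O(1)$ to locate the segment boundaries by binary search, giving $O(\log n)$ isolated $b_k$ computations per factor $\phi$, each at cost $\softO(n^{\omega+1}\deg\phi)$. The only difference is presentational: the paper settles for ``repeat $M(\phi)$ independent binary searches, one per boundary'' (still $O(\log n)$ calls when $M(\phi)=O(1)$), whereas you run a single simultaneous divide-and-conquer over $\{0,\ldots,n-1\}$ that locates all boundaries at once; this is a mild refinement of the same idea and yields the same bound.
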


\begin{proof}
Let us first assume that $M(\phi)=1$ : then the
problem is just to find the smallest $k$ such that $e_k=1$. Since we the
$e_i$'s are non-decreasing, we can use binary search and find this $k$ after
computing $O(\log n)$ basis elements $b_i$'s, for a total cost in
$\softO(n^{\omega+1}\deg\phi)$ and we indeed gain a quasi-linear factor
compared to the previous approach. As long as $M(\phi)$ is constant, a naive
way to get the same result is to repeat binary searches to find the smallest
$k$ such that $e_k=1$, then the smallest $k$ such that $e_k=2$ and so on.
\end{proof}

\begin{remark}
Such extreme cases are not uncommon among the examples presented in the
literature and we believe that beyond this extreme, there will be a trade-off
between this strategy and the classical one for non-constant but small
multiplicities. We do not investigate this trade-off further because finding
proper turning points should be addressed in practice as it depends both on
theory and implementation.
\end{remark}

\subsection{The case of few singularities with high multiplicities}

In the other extreme case where $M(\phi)$ is greater than $n$ our strategy will
perform worse than the original one. Therefore, two ideas seem natural to find
the $e_i$: performing larger ``jumps'' by testing values of $e_i$ which are
multiples of a fixed $\nu>1$ or even following a binary search approach on each
$e_i$. We briefly explain why these strategies do not beat the classical one.

Given a root $\alpha$ of the discriminant, and fixing a $d$ between 1 and $n$,
 it is indeed possible for any $\nu>1$ to try to find elements such that
 $\left(b_d+\sum_{i=0}^{d-1} a_ib_i\right)/(X-\alpha)^{\nu}$ is integral, thus allowing
to skip steps in the iterated updates and divisions. 

But there is a price to pay for this: the system that we will have to solve is
bigger. When dividing by $(X-\alpha)$ the number of equations is the number of
terms of the Puiseux expansions of exponent $\le 1$ which we bounded by
$ne_{max}$. When dividing by $(X-\alpha)^\nu$, however, the number of
equations is bounded by $\nu ne_{max}$. When solving our rectangular system,
recall that the complexity depends linearly on the number of equations, and
thus even though this approach reduces the number of iterations by a factor
close to $\nu$, it increases the complexity of each iteration by a factor
$\nu$.

To sum up, if we want to know whether
$\left(b_d+\sum\alpha_ib_i\right)/(X-\alpha)^{m}$ is integral then it costs the
same (up to logarithmic factors) to either repeat divisions by $(X-\alpha)$ as
in van Hoeij's algorithm, to perform repeated divisions by $(X-\alpha)^\nu$ or
even to solve a single system to directly divide by $(X-\alpha)^m$. Therefore,
this strategy does not bring any advantage over the classical strategy in the
context of van Hoeij's algorithm.

\section{Trager's algorithm}\label{sec:trager}

\subsection{A description of Trager's algorithm}
Computing an integral basis amounts to computing the integral closure of the
$K[x]$-module generated by the powers of $y$. Trager's algorithm~\cite{trager}
computes such an integral closure iteratively using the following integrality
criterion to decide when to stop. Note that there exists many similar
algorithms like Round 2 and Round 4 using various criteria for integrality. A
more precise account on these algorithms and their history is given in the final
paragraphs of~\cite[Section~2.7]{diem}.

\begin{proposition}\cite[Theorem~1]{trager} 
  Let $R$ be a principal domain ($K[X]$ in our case) and $V$ a domain that is a
  finite integral extension of $R$. Then $V$ is integrally closed if and only
  if the idealizer of every prime ideal containing the discriminant equals $V$.
\end{proposition}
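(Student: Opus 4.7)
The plan is to split the proof into the two implications.

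For the forward direction, I will use the determinant trick. Assuming $V$ is integrally closed, pick any prime $\mathfrak{p}$ of $V$ and any $x \in (\mathfrak{p}:\mathfrak{p})$; multiplication by $x$ is then a $V$-linear endomorphism of the finitely generated $V$-module $\mathfrak{p}$ (finitely generated because $V$ itself is a finitely generated module over the Noetherian PID $R$). Cayley--Hamilton yields a monic polynomial over $V$ satisfied by $x$, so $x$ is integral over $V$ and therefore in $V$ by hypothesis. This gives $(\mathfrak{p}:\mathfrak{p}) = V$ for every prime $\mathfrak{p}$, in particular for those containing $\Disc(V/R)$.

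For the reverse direction I will argue contrapositively: assume $V \ne V'$, where $V' = \bar V$ denotes the integral closure of $V$ in $L = \Frac V$, and exhibit a prime $\mathfrak{p}$ of $V$ containing $d = \Disc(V/R)$ with $(\mathfrak{p}:\mathfrak{p}) \supsetneq V$. The first ingredient is the classical inclusion $d \cdot V' \subseteq V$, proved by applying Cramer's rule to the $R$-linear system obtained by trace-pairing an $R$-basis of $V$ against an element of $V'$ (valid because $V'$ is $R$-free of rank $n = [L:\Frac R]$, thanks to $R$ being a PID). Equivalently, $d$ lies in the conductor $\mathfrak{c} = \{v \in V : v V' \subseteq V\}$, which is a proper ideal of $V$ and simultaneously an ideal of $V'$.

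Next, I will choose $\mathfrak{p}$ to be a prime of $V$ minimal over $\mathfrak{c}$, so that $d \in \mathfrak{p}$. Localizing, $V_\mathfrak{p}$ is a one-dimensional Noetherian local domain (one-dimensionality following from Cohen--Seidenberg applied to the integral extension $R \subseteq V$) that is strictly contained in $V'_\mathfrak{p}$, since $\mathfrak{p}$ lies in the support of $V'/V$. The one-dimensional Krull normality criterion then says that a one-dimensional Noetherian local domain $(A,\mathfrak{m})$ is integrally closed if and only if $(\mathfrak{m}:\mathfrak{m}) = A$; therefore $(\mathfrak{p} V_\mathfrak{p} : \mathfrak{p} V_\mathfrak{p}) \supsetneq V_\mathfrak{p}$. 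Because the idealizer of a finitely generated ideal commutes with localization at any prime, this strict inclusion on the stalk forces the strict inclusion $(\mathfrak{p}:\mathfrak{p}) \supsetneq V$ globally, which is what we needed.

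The main obstacle I foresee lies in the reverse direction: assembling the right classical toolkit (the trace-based identity $d V' \subseteq V$ and the one-dimensional Krull criterion) and bridging the local non-normality back to a global statement about the idealizer. The bridging step relies on the compatibility of the idealizer with localization for the finitely generated ideal $\mathfrak{p}$; the delicate bookkeeping is ensuring that picking $\mathfrak{p}$ minimal over $\mathfrak{c}$ really produces a prime where $V_\mathfrak{p} \subsetneq V'_\mathfrak{p}$, so that Krull's criterion can be applied.
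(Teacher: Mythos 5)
Your proof is correct, and both directions go through as you describe: the forward direction is the usual determinant-trick argument (a faithful finitely generated module over a domain forces integrality of anything that stabilizes it), and the reverse direction correctly chains the classical inclusion $\Disc(V/R)\cdot\overline{V}\subseteq V$, the choice of a prime over the conductor, the one-dimensional Krull normality criterion applied to $V_\gothp$, and compatibility of $\mathrm{End}_V(\gothp)$ with localization since $\gothp$ is finitely presented over the Noetherian ring $V$. Note that the paper simply writes ``See~\cite{trager}'' for this proposition, so there is no in-paper proof to compare against; the argument you give is the standard classical one, and it is (up to presentation) the same toolkit -- conductor, discriminant, and a local criterion -- that underlies Trager's treatment, so you should regard this as essentially the same approach rather than a genuine alternative. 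One small remark: you do not actually need $\gothp$ minimal over the conductor; any prime containing $\mathfrak{c}$ is automatically in the support of $\overline{V}/V$ and, being nonzero (it contains $d\neq 0$), is maximal in the one-dimensional domain $V$, which is all that the localization and Krull's criterion require.
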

\begin{proof} See~\cite{trager}. \end{proof}

More precisely, Trager's algorithm uses the following corollary to the above
proposition:

\begin{proposition}\cite[Corollary~2]{trager} 
  The module $V$ is integrally closed if and only if the idealizer of the
  radical of the discriminant equals $V$.  
\end{proposition}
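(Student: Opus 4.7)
The plan is to deduce this corollary from the preceding proposition by showing that the idealizer construction interacts well with the prime decomposition of the radical of $\Disc(f)$.

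First, since $R=K[X]$ is a PID, the radical $\sqrt{\Disc(f)R}$ is a finite product $\gothp_1\cdots\gothp_s$ of distinct prime ideals of $R$, namely those generated by the distinct irreducible factors of $\Disc(f)$. Because $V$ is a finite integral extension of $R$, only finitely many primes $P_1,\ldots,P_m$ of $V$ lie over the $\gothp_i$, and each of them is maximal: their contractions to $R$ are maximal ($R$ is one-dimensional), and in an integral extension a contracted maximal ideal forces a maximal ideal in the extension. It follows that $\sqrt{\Disc(f)V}=\bigcap_{j=1}^m P_j$, and that the $P_j$'s are pairwise comaximal in $V$.

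Second, I would prove the identity $(\bigcap_j P_j : \bigcap_j P_j)=\bigcap_j(P_j:P_j)$, where $(I:I)=\{x\in\Frac(V):xI\subseteq I\}$ denotes the idealizer. The inclusion $\supseteq$ is immediate. For the converse, given $x$ stabilizing $\bigcap_j P_j$, localize at each $P_j$: by comaximality $P_iV_{P_j}=V_{P_j}$ for $i\ne j$, so the intersection becomes $P_jV_{P_j}$ locally and $x$ stabilizes $P_jV_{P_j}$; a standard gluing argument (or a direct manipulation via the Chinese remainder theorem for the $P_j$) then places $x$ in $(P_j:P_j)$.

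Finally, by the preceding proposition $V$ is integrally closed if and only if $(P_j:P_j)=V$ for each $j$, and by the identity just established this is equivalent to $(\sqrt{\Disc(f)V}:\sqrt{\Disc(f)V})=V$. The main obstacle is precisely the commutation of the idealizer with the intersection: this step is what makes the corollary a genuine reduction of the original criterion to a single idealizer computation, and it relies essentially on the pairwise comaximality of the $P_j$'s, which in turn rests on $R$ being one-dimensional.
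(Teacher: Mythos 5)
The paper offers no proof of this proposition; it is cited verbatim from Trager's thesis, so your argument must be judged on its own. It does not go through: the central identity $\bigl(\bigcap_j P_j : \bigcap_j P_j\bigr) = \bigcap_j (P_j : P_j)$ is false, and the localization step is exactly where it fails. Knowing that $x$ stabilizes $\bigcap_k P_k$ only tells you, after localizing at $P_i$ with $i \neq j$, that $x$ lies in the \emph{local} idealizer $\bigl((P_i)_{P_i} : (P_i)_{P_i}\bigr)$, whereas $(P_j : P_j)_{P_i}$ equals $V_{P_i}$; there is no gluing that bridges this gap when $V_{P_i}$ is not integrally closed. Concretely, take $V = k[x,y]/(y^2 - x^2(x+1))$ over $R=k[x]$, with the node $P_1 = (x,y)$ and the smooth point $P_2 = (x+1,y)$, both lying over the radical of $\Disc(f) = 4x^2(x+1)$. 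The element $u = y/x$ is integral (since $u^2 = x+1$) and stabilizes $P_1 \cap P_2 = P_1P_2 = (x(x+1),\, xy,\, y(x+1))$, as one checks on generators; yet $u(x+1) = u^3 \notin V$, so $u \notin (P_2:P_2)$ and hence $u \notin \bigcap_j (P_j:P_j) = V$. The same example also defeats your final step, which tacitly needs $\bigcap_j (P_j:P_j) = V$ to force each $(P_j:P_j)=V$: here the intersection is $V$ while $(P_1:P_1) \ni u \notin V$, and $V$ is visibly not integrally closed.

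The comaximality of the $P_j$ is indeed the crucial ingredient, but it has to be used in the opposite direction: not to split the global idealizer into local pieces, but to transport a witness from one local idealizer into the global one. If $V$ is integrally closed, then $(\sqrt{\Disc\cdot V} : \sqrt{\Disc\cdot V})$ is a finite $V$-module, hence integral over $V$, hence contained in $\overline{V} = V$, giving equality. Conversely, if some $(P_j:P_j) \supsetneq V$, pick $x$ in the difference, and by CRT pick $\epsilon \in V$ with $\epsilon \equiv 1 \pmod{P_j}$ and $\epsilon \in P_i$ for all $i \neq j$. For any $z \in \bigcap_k P_k$ one has $xz \in P_j \subseteq V$, so $\epsilon x z$ lies in every $P_i$, showing $\epsilon x$ stabilizes $\bigcap_k P_k$; and since $(1-\epsilon)x \in xP_j \subseteq V$ but $x \notin V$, we get $\epsilon x = x - (1-\epsilon)x \notin V$. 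Thus the idealizer of $\sqrt{\Disc\cdot V}$ strictly contains $V$, which is what the corollary needs.
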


Starting from any basis of integral elements generating a module $V$ the idea
is to compute $\hat{V}$ the idealizer of the radical of the product of all such
ideals in $V$. Either $\hat{V}$ is equal to $V$ and we have found an integral
basis, or $\hat{V}$ is strictly larger and we can repeat the operation. We
therefore build a chain of modules whose length has to be finite.  Indeed, the
discriminant of each $V_{i}$ has to be a strict divisor of that of $V_{i-1}$.

\begin{algorithm}[ht]\label{algo:trager}
  \Input{A degree-$n$ monic squarefree polynomial $f(y)$ over $K[x]$}
  \Output{An integral basis for $K[x,y]/\langle f\rangle $}

 $D\gets \Disc(f)$ \;
 $B\gets (1,y,\cdots,y^{n-1})$\;
 \While{true}{
 Set $V$ the $K[x]$-module generated by $B$ \;
 $Q\gets \prod P_i$, where $P_i^2\vert D$\;
 If $Q$ is a unit then return $B$\;
 Compute $J_Q(V)$ the $Q$-trace radical of $V$ \;
 Compute $\hat{V}$ the idealizer of $J_Q(V)$ \;
 Compute $M$ the change of basis matrix from $\hat{V}$ to $V$\;
 Compute $\det M$, if it is a unit then return $V$ \;
 Update $B$ by applying the change of basis \; 
 $D\gets D/(\det M)^2$ \;
 $V\gets \hat{V}$ \; }
  
  \caption{A bird's eye view of Trager's algorithm~\cite{trager}}
\end{algorithm}

\paragraph*{Computing the radical.}

Following Trager, we avoid computing the radical of the ideal generated by
$\Disc(f)$ directly. First, we note that this radical is the intersection of
the radical of the prime ideals generated by the irreducible factors of
$\Disc(f)$. Let $P$ be such a factor, we then use the fact that in
characteristic zero or greater than $n$, the radical of $\langle P\rangle$ is
exactly the so-called $P$-trace radical of $V$ (see~\cite{trager}) i.e. the set
$J_P(V)=\{ u\in V | \forall w\in V, P| \tr(uw)\}$, where the trace $\tr(w)$ is
the sum of the conjugates of a $w\in K(x)[y]$ viewed as a degree-$n$ algebraic
extension of $K(x)$. 

The reason we consider this set is that it is much easier to compute than the
radical. Note that Ford and Zassenhaus' Round 2 algorithm is designed to handle
the case where this assumption fails but we do not consider this possibility
because if it should happen it would be more suitable to use van Hoeij's
algorithm for the case of small characteristic~\cite{smallcar}. This latter
algorithm is different from the one we detailed in Section~\ref{sec:vh} but
follows the same principle, replacing Puiseux series by a criterion for
integrality based on the Frobenius endomorphism.

Finally, for $Q=\prod P_i$ we define the $Q$-trace radical of $V$ to be the
intersection of all the $J_{P_i}(V)$. Here, we further restricted the $P_i$'s
to be the irreducible factors of $\Disc(f)$ whose square still divide
$\Disc(f)$. In what follows, we summarize how $J_Q(V)$ is computed in Trager's
algorithm. Once again, we refer to~\cite{trager} for further details and proofs.

Let $M$ be the trace matrix of the module $V$, i.e. the matrix whose entries
are the $(\tr(w_iw_j))_{i,j}$, where the $w_i$'s form a basis of $V$. An element
$u$ is in the $Q$-trace radical if and only if $Mu$ is in $Q\cdot R^n$. In
Trager's original algorithm, the $Q$-trace radical is computed via a $2n\times
n$ row reduction and one $n\times n$ polynomial matrix inversion. 

We replace this step and compute a $K[x]$-module basis of the $Q$-trace radical
by using an approach due to Neiger~\cite{neiger16} instead. Indeed, given a
basis $w_i$ of the $K[x]$-module $v$, the $Q$-trace radical can be identified
to the set 
\[\left\lbrace f_1,\cdots,f_n\in K[x]^n \: \bigg\vert\: \forall 1\le j\le n,\: \sum_{i=1}^n f_i\tr(w_iw_j) =0\bmod Q(x)\right\rbrace.\]

Using~\cite[Theorem~1.4]{neiger16} with $n=m$ and the shift $s=0$, there is a
deterministic algorithm which returns a basis of the $Q$-trace radical in Popov
form for a cost of $\softO(n^\omega\deg(Q))$ field operations.

\paragraph*{Computing the idealizer.}

The idealizer of an ideal $\mathfrak{m}$ of $V$ is the set of $u\in\Frac(V)$
such that $u\mathfrak{m}\subset\mathfrak{m}$. Let $M_i$ represent the
multiplication matrix by $m_i$ with input basis $(v_1,\cdots,v_n)$ and output
basis $(m_1,\cdots,m_m)$. Then to find the elements $u$ in the idealizer we
have to find all $u\in\Frac(R)$ such that $Mu\in R^{n^2}$. Note that building
these multiplication matrices has negligible cost (in $O(n^2)$ field
operations) using the technique of~\cite{trager76}.

Following Trager, we row-reduce the matrix $M$ and consider $\hat{M}$ the top left
$n\times n$ submatrix and the elements of the idealizer are now exactly the $u$
such that $\hat{M}u \in R^n$. Thus, the columns of $\hat{M}^{-1}$ form a basis
of the idealizer. Furthermore, the transpose of $\hat{M}^{-1}$ is the change of
basis matrix from $V_i$ to $V_{i+1}$. 

\subsection{Complexity analysis}

The purpose of this section is to prove the following theorem.

\begin{theorem}\label{thm:cplxtrager}
  Consider $f$ a degree-$n$ monic squarefree polynomial in $K[x][y]$, then
  Algorithm~\ref{algo:trager} returns an integral basis for the cost of
  factoring $\Disc(f)$ and $\softO(n^5\deg\Disc(f))$ operations in $K$.
\end{theorem}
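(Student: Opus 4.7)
The plan is to prove Theorem~\ref{thm:cplxtrager} by analyzing Algorithm~\ref{algo:trager} in two pieces: the cost of a single iteration of the main while loop, and the total number of iterations. Correctness is already established in~\cite{trager}, so I only address the arithmetic cost; the factorization of $\Disc(f)$ is a one-time preprocessing counted separately as in the statement.

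For the per-iteration cost, I fix the current module $V$ with basis $b_0,\dots,b_{n-1}$. Because the denominators arising in the iterated refinements divide the initial discriminant, each $b_i$ is a polynomial of degree $<n$ in $y$ whose $K[x]$-coefficients have degree $O(\deg\Disc(f))$. Computing the trace matrix $(\tr(b_ib_j))_{i,j}$ then reduces to $O(n^2)$ products modulo $f$ followed by trace extractions, which fits in $\softO(n^{\omega+1}\deg\Disc(f))$ field operations with fast polynomial arithmetic. The $Q$-trace radical is next obtained in $\softO(n^{\omega}\deg Q)\le\softO(n^{\omega}\deg\Disc(f))$ field operations via Neiger's algorithm~\cite[Theorem~1.4]{neiger16}. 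For the idealizer, the multiplication matrices are assembled in $O(n^2)$ operations via~\cite{trager76} and stacked into an $n^2\times n$ polynomial matrix; row-reducing it and inverting the resulting $n\times n$ top-left block form the bottleneck, at cost $\softO(n^{\omega+1}\deg\Disc(f))$ using fast polynomial matrix normal-form algorithms. Thus one iteration costs $\softO(n^{\omega+1}\deg\Disc(f))$ field operations.

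For the number of iterations, each iteration strictly reduces $\deg D$ by $2\deg(\det M_i)\ge 2$, yielding the crude bound $\deg\Disc(f)/2$, which is too weak. The sharper estimate I would use is that the chain $V_0\subsetneq V_1\subsetneq\cdots\subset\bar V$ lies inside the integral closure and has length $\softO(n)$, since the idealizer step resolves integrality at all primes dividing $Q$ simultaneously and the local contribution at each prime is bounded by $n$. Combining the per-iteration cost $\softO(n^{\omega+1}\deg\Disc(f))$ with $\softO(n)$ iterations yields $\softO(n^{\omega+2}\deg\Disc(f))\subseteq\softO(n^5\deg\Disc(f))$ for any feasible $\omega\le 3$.

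The hard part will be making precise the $\softO(n)$ bound on the iteration count. One route is a direct combinatorial argument on chain lengths of $K[x]$-modules between $V_0$ and the integral closure; another is an amortized analysis exploiting the telescoping sum $\sum_i 2\deg(\det M_i)\le \deg\Disc(f)$, refining the per-iteration cost so that it scales with the \emph{current} $\deg D_i$ rather than $\deg\Disc(f)$ throughout. Either route should give the claimed $\softO(n^5\deg\Disc(f))$ bound after summation.
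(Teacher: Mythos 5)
Your decomposition into per-iteration cost times iteration count is the right shape, but your primary route does not close the gap, and your fallback route is the one the paper actually carries out. The claim that the chain $V_0\subsetneq V_1\subsetneq\cdots\subset\overline{V}$ has length $\softO(n)$ is unjustified and, in the worst case, false: the paper only proves that the iteration count is bounded by half the largest multiplicity occurring in $\Disc(f)$, which can be as large as $\Theta(n^2)$ (e.g.\ a highly ramified cusp contributes $O(n^2)$ to the valuation of the discriminant, and each idealizer pass is only guaranteed to strip off one factor $Q_i$). Moreover, your per-iteration estimate $\softO(n^{\omega+1}\deg\Disc(f))$ glosses over two points the paper is careful about: the $n^2\times n$ row reduction is performed by naive Gaussian elimination and therefore incurs intermediate entries of degree up to $n\deg Q$, and the subsequent inversion of the $n\times n$ block gives a further factor-$n$ degree growth, yielding $\softO(n^{\omega+2}\deg Q)$ for the inversion alone and $O(n^5\deg Q)$ for the elimination; a blanket appeal to fast polynomial-matrix normal forms at cost $\softO(n^{\omega+1}\deg\Disc(f))$ is not justified for this shape and this degree profile.

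The amortized argument you sketch at the end is the essential missing piece. The paper's key observations are: (i) before the elimination one may reduce the trace-matrix entries modulo $Q_i$ (the product of squared prime factors of the \emph{current} discriminant $D_i$), so the per-iteration cost is $O(n^5\deg Q_i)$ and not $O(n^5\deg D_i)$ or $O(n^5\deg\Disc(f))$; and (ii) an irreducible factor $P_j$ of multiplicity $\nu_j$ in $\Disc(f)$ appears in $Q_i$ only for $i\le\lfloor\nu_j/2\rfloor$, so $\sum_i\deg Q_i=\sum_j\lfloor\nu_j/2\rfloor\deg P_j\le\deg\Disc(f)$, and the total is $\softO(n^5\deg\Disc(f))$ without any bound at all on the raw iteration count. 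You need both (i) and (ii); without the initial reduction modulo $Q_i$ the amortization does not kick in, and without the telescoping sum the $O(n^2)$ worst-case iteration count overshoots the claimed bound.
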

\begin{proof}
The dominant parts in this algorithm are the computations of radicals and
idealizers, which have been reduced to linear algebra operations on polynomial
matrices.  First, we have already seen how to compute the $Q$-trace radical
$J_Q(V)$ in $\softO(n^\omega\deg(Q))$ field operations using the algorithm
presented in~\cite{neiger16}. 

To compute the idealizer of $J_Q(V)$, we row-reduce a $n^2\times n$ matrix with
entries in $K[x]$ using naive Gaussian elimination. This costs a total of $O(n^4)$
operations in $K(x)$.

Then we extract the top $n\times n$ square submatrix $\hat{M}$ from this
row-reduced $n^2\times n$ matrix and invert it for $\softO(n^\omega)$
operations in $K(x)$.  The output $\hat{M}^{-1}$ of this gives a basis of a
module $\hat{V}$ such that $V\subset\hat{V}\subset\overline{V}$. 

To translate operations in $K(x)$ into operations in $K$, one can bound the
degrees of all the rational fractions encountered, however it is quite
fastidious to track degree-growth while performing the operations described
above. In fact, we exploit the nature of the problem we are dealing with. 

Our first task is to row-reduce a matrix $M$ built such that a $u=\sum_{i=1}^n
\rho_iv_i$ is in $\hat{V}$ if and only if $M(\rho_1,\ldots,\rho_n)^t\in
K[x]^n$. The $\rho_i$'s are rational fractions but their denominator divides
$Q$. Therefore, we fall back to finding solutions of
$M(\tilde{u}_1,\ldots,\tilde{u}_n)^t\in \left(Q(x)\cdot K[x]\right)^n$, where
the $\tilde{u}_i$'s are polynomials. In this case, it does no harm to reduce
the entries of the matrix $M$ modulo $Q$, however performing Gaussian
elimination will induce a degree growth that may cause us to handle polynomials
of degree up to $n\deg Q$ instead of $\deg Q$. With this bound, the naive
Gaussian elemination costs a total of $O(n^5\deg Q)$ operations in $K$.

After elimination, we retrieve a $n\times n$ matrix $\hat{M}$ whose entries
have degrees bounded by $n\deg Q$. Inverting it will cause another degree
increase by a factor at most $n$. Thus, the inversion step has cost in
$\softO(n^{\omega+2}\deg Q)$. Since $\omega\le 3$, each iteration of Trager's
algorithm has cost bounded by $O(n^5\deg Q)$. 

Now, let us assess how many iterations are necessary. Let us assume that we are
exiting step $i$ and have just computed $V_{i+1}$ from $V_i$. Let us consider
$P$ a square factor of $\Disc(V_i)$. Let $\mathfrak{m}$ be a prime ideal of
$V_i$ containing $P$. Let us consider $u\in V_{i+1}$, then by definition $uP\in
\mathfrak{m}$ because $P\in\mathfrak{m}$ and therefore $u\in
\frac{1}{P}\mathfrak{m}\subset \frac{1}{P}V_{i}$. Thus,
$V_{i+1}\subset\frac{1}{P}V_{i}$. This means that at each step $i$ we have
$\Disc(V_{i+1})=\Disc(V_i)/Q_i^2$, where $Q_i$ is the product of square factors
of $\Disc(V_i)$. Thus, the total number of iterations is at most half the
multiplicity of the largest factor of $\Disc(f)$. 

More precisely, if we assume that the irreducible factors of $\Disc(f)$
are $r$ polynomials of respective degrees $d_i$ and multiplicity $\nu_i$, then
the overall complexity of Trager's algorithm is in \[\softO\left(\sum_{i=1}^\nu
n^5\sum_{j\le r,\: \nu_j\ge 2i} d_j \right),\]
where $\nu=\lfloor \max \nu_i/2\rfloor$.

Since $\sum_{i=1}^r
\nu_id_i \le \deg\Disc(f)$, the above bound is in $\softO(n^{5}\deg\Disc(f))$, which ranges between
$\softO(n^{6}d_x)$ and $\softO(n^5)$ depending on the input
$f$.
\end{proof}

\begin{remark}
In the above proof, our consideration of degree growth seems quite pessimistic
given that the change of basis matrix has prescribed determinant. It would be
appealing to perform all the computations modulo $Q$ but it is unclear to us
whether the algorithm remains valid. However, even assuming that it is
possible, our complexity estimate would become $\softO(n^4\deg\Disc(f))$, which
is still no better than the bound we give in next section.
\end{remark}

\section{Integral bases through Weierstrass factorization and truncations of Puiseux series}\label{sec:bohm}
 
As van Hoeij's algorithm, this algorithm due to B\"ohm et al.~\cite{bohm}
relies on computing local integral bases at each ``problematic'' singularity and
then recovering a global integral basis. But this algorithm then splits the
problem again into computing contributions to the integral basis at each
branch of each singularity. 

More precisely, given a reduced Noetherian ring $A$ we denote by $\overline{A}$
its normalization i.e. the integral closure of $A$ in its fraction field
$\Frac(A)$. In order to compute the normalization of $A=K[x,y]/\langle
f(x,y)\rangle$ we use the following result to perform the task locally at all
the singularities.

\begin{proposition}\cite[Proposition~3.1]{bohm}
  Let $A$ be a reduced Noetherian ring with a finite singular locus
  $\{P_1,\ldots, P_s\}$. For $1\le i\le s$, let an intermediate ring $A\subset
  A^{(i)}\subset\overline{A}$ be given such that
  $A^{(i)}_{P_i}=\overline{A_{P_i}}$. Then $\sum_{i=1}^s
  A^{(i)}=\overline{A}$.
\end{proposition}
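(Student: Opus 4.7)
The plan is to apply the standard local-global principle for modules: if $M \subset N$ are $A$-submodules of a common ambient $A$-module and $M_P = N_P$ at every maximal ideal $P$ of $A$, then $M = N$. I would take $M = \sum_{i=1}^s A^{(i)}$, viewed as an $A$-submodule of $\overline{A}$, and $N = \overline{A}$. The inclusion $M \subset N$ is immediate from the hypothesis $A^{(i)} \subset \overline{A}$, so the task reduces to checking $M_P = \overline{A}_P$ at every maximal ideal $P$ of $A$.

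I would split into two cases depending on whether $P$ lies in the singular locus. If $P \notin \{P_1,\ldots,P_s\}$, then by definition $A_P$ is already normal, so $A_P = \overline{A_P}$. Using the standard fact that normalization commutes with localization at multiplicative sets, $\overline{A_P} = \overline{A}_P$, and the chain $A_P \subset M_P \subset \overline{A}_P = A_P$ forces the required equality. If instead $P = P_i$ for some $i$, then by hypothesis $A^{(i)}_{P_i} = \overline{A_{P_i}} = \overline{A}_{P_i}$; since $A^{(i)} \subset M \subset \overline{A}$, localizing yields $\overline{A}_{P_i} = A^{(i)}_{P_i} \subset M_{P_i} \subset \overline{A}_{P_i}$, hence equality.

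The only ingredients beyond the local-global principle itself are that normalization commutes with localization and that localizations of $A$ at non-singular primes agree with their integral closure; both are classical results in commutative algebra that I would simply cite. There is no genuine obstacle, as the argument is essentially a bookkeeping reduction to the defining hypothesis $A^{(i)}_{P_i} = \overline{A_{P_i}}$, which by design already carries the local normalization content at each problematic prime. A minor subtlety worth mentioning is that $\sum_{i=1}^s A^{(i)}$ need not obviously be a subring of $\overline{A}$, but this is harmless: once module equality with $\overline{A}$ is established, closure under multiplication is automatic, and the whole argument requires only the module structure.
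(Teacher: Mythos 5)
Your proof is correct and is essentially the standard argument. The paper itself does not give a proof of this statement (it defers to the cited reference \cite{bohm2013}), and that reference uses the same local--global reduction you describe: check the two chains $A_P \subset \bigl(\sum_i A^{(i)}\bigr)_P \subset \overline{A}_P$ at non-singular maximal ideals (where $A_P = \overline{A}_P$ since $A_P$ is normal) and at each $P_i$ (where the hypothesis $A^{(i)}_{P_i} = \overline{A_{P_i}}$ does the work), using that localization commutes with normalization and with finite sums of submodules.
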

\begin{proof} See the proof of~\cite[Proposition~3.2]{bohm2013}.\end{proof}

Each of these intermediate rings is respectively called a \emph{local
contribution} to $\overline{A}$ at $P_i$. In the case where
$A^{(i)}_{P_j}=A_{P_j}$ for any $j\ne i$, we say that $A^{(i)}$ is a
\emph{minimal local contribution} to $\overline{A}$ at $P_i$. Here, we consider
the case $A=K[x,y]/\langle f(x,y)\rangle$ and will compute minimal local
contributions at each singularity of $f$. This is summarized in
Algorithm~\ref{algo:bohm}.

\begin{algorithm}[ht]\label{algo:bohm}
  \Input{A monic irreducible polynomial $f(y)$ over $K[x]$}
  \Output{An integral basis for $K[x,y]/\langle f\rangle$}

  $n\gets \deg_y(f)$ \;
  $S_{fac}\gets$ set of factors $\phi$ such that $\phi^2|\Disc(f)$\;

  \For{$\phi$ in $S_{fac}$}{
    Compute $\alpha$ a root of $\phi$ (possibly in extension) \;
    Apply a linear transform to fall back to the case of a singularity at $x=0$ \;
    Compute the maximal integrality exponent $E(f)$ \;
    Using Proposition~\ref{prop:facto}, factor $f$ over $K[[x]][y]$\;
    Compute the B\'ezout relations of Proposition~\ref{prop:facsplit} \;
    Compute integral bases for each factor as in Section~\ref{sec:normbranch} \;
    As in Section~\ref{sec:branchsplit}, recover the local contribution corresponding to $\phi$ \;
    (For this, use Proposition~\ref{prop:facsplit} and Proposition~\ref{prop:invertfactor})
  }
    From all the local contributions, use CRT to deduce $B$ an integral basis \;
    \Return{B} \;
    \caption{Adaptation of the algorithm by B\"ohm et al.~\cite{bohm}}
\end{algorithm}

In this section, we revisit the algorithm presented by B\"ohm et al.
in~\cite{bohm} and replace some of its subroutines in order to give complexity
bounds for their approach. Note that these modifications are performed solely
for the sake of complexity and rely on algorithms for which implementations may
not be available. However, we note that our new description makes this
algorithm both simpler and more efficient because we avoid using Hensel lifting
to compute the $E(f)$ and the triples $(a_i,b_i,c_i)$ which are actually
obtained as byproducts of the factorization of $f$ over $K[ [x] ][y]$. This
allows us to prove the following theorem.

\begin{theorem}\label{thm:bohmcplx}
Let $f(x,y)$ be a degree-$n$ monic squarefree polynomial in $y$. Then
Algorithm~\ref{algo:bohm} returns an integral basis of $K[x,y]/\langle
f\rangle$ and costs a univariate factorization of degree $\deg \Disc(f)$ over
$K$, at most $n$ factorizations of degree-$n$ polynomials over an extension of
$K$ of degree $\le \deg(\Disc(f))$ and $\softO(n^2\deg\Disc(f))$ operations in
$K$. 

\end{theorem}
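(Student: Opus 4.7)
The plan is to walk through Algorithm~\ref{algo:bohm} line by line, bound the cost of each subroutine using the results accumulated earlier in the paper, and then sum over the outer loop on the squareful factors $\phi$ of $\Disc(f)$. Correctness is not in question since we only reshuffle the subroutines of~\cite{bohm}; the bulk of the work is the arithmetic bookkeeping.

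First I would handle the one-time preprocessing: computing $\Disc(f)$ (a resultant, of size $O(nd_x)$) and extracting its squareful part via a squarefree factorization, which is absorbed in the stated univariate factorization of degree $\deg\Disc(f)$. Next, for each factor $\phi\in S_{fac}$ I would account for the field extension: picking a root $\alpha$ of $\phi$ forces all subsequent arithmetic to take place in $K(\alpha)$, an extension of degree $\le \deg\phi$, so any operation in $K(\alpha)$ costs $\softO(\deg\phi)$ operations in $K$; the shift $x\mapsto x-\alpha$ is linear and cheap. I would then invoke Proposition~\ref{prop:facto} to factor $f$ over $K[[x]][y]$: this is the promised factorization of a degree-$n$ polynomial over an extension of $K$ of degree $\le\deg(\Disc(f))$. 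Summing the occurrences of this factorization over $\phi\in S_{fac}$ yields at most $\#S_{fac}\le n$ such factorizations, which matches the theorem statement; the key observation is that $E(f)$ and the B\'ezout triples $(a_i,b_i,c_i)$ come out as byproducts of this factorization, removing the Hensel-lifting cost that otherwise would dominate.

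The core computational step is the per-branch integral basis of Section~\ref{sec:normbranch}, followed by reassembly via Proposition~\ref{prop:facsplit} and Proposition~\ref{prop:invertfactor} (Section~\ref{sec:branchsplit}). I would bound each of these by arithmetic on truncated Puiseux series at precision controlled by $E(f)$, noting that the relevant precision sums (across all branches of $\phi$) are bounded by the $\phi$-valuation $v_\phi(\Disc(f))$ of the discriminant, up to a factor $n$ coming from the $n$ branches. Using fast polynomial arithmetic, each such manipulation costs $\softO(n\cdot v_\phi(\Disc(f))\deg\phi)$ operations in $K$, and the final per-factor CRT-style gluing of branch contributions (via B\'ezout relations) contributes an additional $\softO(n^2v_\phi(\Disc(f))\deg\phi)$. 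Summing over $\phi$, the identity $\sum_\phi v_\phi(\Disc(f))\deg\phi = \deg\Disc(f)$ collapses the total to $\softO(n^2\deg\Disc(f))$ operations in $K$.

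Finally, the global integral basis is recovered from the local contributions by a CRT on polynomial denominators; since the total size of the local data is bounded by $O(n^2\deg\Disc(f))$, using fast multi-modular CRT~\cite{aecf} keeps this step within the same $\softO(n^2\deg\Disc(f))$ budget. The main obstacle I anticipate is the precision bookkeeping for the branch-level computations: I need to certify that the precision at which each branch is manipulated really is controlled by $E(f)$ and does not blow up during the B\'ezout-based recombination, so that summing the per-branch costs genuinely telescopes to $\deg\Disc(f)$ rather than to a larger quantity such as $n\deg\Disc(f)$.
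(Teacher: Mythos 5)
Your plan traces the same route as the paper's proof: factor the discriminant once, loop over the squareful factors $\phi$, work in an extension $K'$ of degree $\deg\phi$, invoke the Poteaux--Weimann algorithm to factor $f$ into branches and extract the B\'ezout triples as byproducts, compute per-branch bases via Proposition~\ref{prop:brch}, glue them with Propositions~\ref{prop:facsplit}, \ref{prop:gluebases} and~\ref{prop:invertfactor}, and sum over $\phi$ using $\sum_\phi M(\phi)\deg\phi\le\deg\Disc(f)$. Two points you leave open, however, must be closed for the per-factor cost to come out right. The precision bookkeeping you flag as your main concern is handled in the paper by bounding $E(f)=e_{n-1}\le\delta$ (via~\cite[Definition~4.14]{bohm}) and each $c_i\le v_x\bigl(\partial f/\partial y\bigr)\le\delta$ (via~\cite[Proposition~8]{PoWe17}); together these show that the working precision for every branch computation and every B\'ezout relation is $O(M(\phi))$, which is exactly what prevents the extra factor of $n$ you worry about. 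Second, your sketch omits the step that puts the glued basis into triangular form before Proposition~\ref{prop:invertfactor} can be invoked: the elements coming out of Proposition~\ref{prop:gluebases} have $y$-degree up to $O(n)$ and must first be reduced modulo $f$ (fast Euclidean division over $K'[x]/\langle x^E\rangle$, cost $\softO(n^2M(\phi))$) and then normalized by computing a Hermite Normal Form of an $n\times n$ polynomial matrix ($\softO(n^{\omega-1}M(\phi))$ by~\cite[Theorem~1.2]{labahn}). Both costs fit inside the $\softO(n^2M(\phi))$ per-factor budget so your stated bound survives, but the plan should account for this step explicitly.
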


\subsection{Computing normalization at one branch}\label{sec:normbranch}

Let us first address the particular case when $f(x,y)$ is an irreducible
Weierstrass polynomial. This way, we will be able to compute integral bases for
each branches at a given singularity. The next section will then show how to
glue this information first into a local integral basis and then a global
integral basis can be computed using CRT as in van Hoeij's algorithm. The main
result of this section is the following proposition. 

\begin{proposition}\label{prop:brch} 
  Let $g$ be an irreducible Weierstrass polynomial of degree $m$ whose Puiseux
  expansions have already been computed up to sufficiently large precision
  $\rho$. An integral basis for the normalization of $K[[x]][y]/\langle
  g\rangle$ can be computed in $\softO(\rho m^2)$ operations in $K$. 
\end{proposition}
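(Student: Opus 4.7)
The plan is to adapt van Hoeij's incremental construction to this local, single-branch setting, exploiting two simplifications afforded by irreducibility. First, since $g$ is irreducible over $K[[x]][y]$, the ring $K[[x]][y]/\langle g\rangle$ has a unique place above $x=0$; its normalization is isomorphic to $K_1[[T]]$ for some extension $K_1/K$ of degree $\varphi$, with $x\mapsto \gamma T^e$ and $e\varphi = m$. The input RPE provides this isomorphism as $R(T) = (\gamma T^e, S(T)) \in K_1((T))^2$ at precision $\rho$, and integrality of any rational element $h(x,y)$ can be tested by evaluating $h(\gamma T^e, S(T))$ and checking that the resulting Laurent series has non-negative $T$-valuation (equivalently, that its principal part vanishes).

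I would then construct basis elements $b_0=1, b_1, \ldots, b_{m-1}$ of the form $(y^d + \text{lower-degree terms})/x^{k_d}$ iteratively, exactly as in Section~\ref{sec:vh}. For each $d$, initialize $b_d := y b_{d-1}$ and then repeatedly replace $b_d$ by $(b_d + \sum_{i<d} a_i b_i)/x$ whenever scalars $a_i$ (in $K$, viewed inside $K_1$) can be found making the result integral. Finding such $a_i$'s reduces to solving a linear system over $K$ whose rows encode the coefficients of the negative $T$-powers in the evaluation. Because the previously computed $b_i$'s have pairwise distinct $T$-valuations after evaluation (this is precisely the invariant that will make the final family a $K[[x]]$-basis of the normalization), the principal parts of the $\tilde b_i = b_i(\gamma T^e, S(T))$ populate different leading exponents, and the resulting linear system is essentially triangular.

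To bound the cost, maintain alongside each $b_i$ its evaluation $\tilde b_i \in K_1((T))$ truncated at precision $\rho$. A successful division of $b_d$ by $x$ then requires only (i) forming the linear combination $\tilde b_d + \sum a_i \tilde b_i$ and dividing by $\gamma T^e$, for a cost of $\softO(d\varphi\rho)$ $K$-operations, and (ii) a triangular solve, which is negligible. The number of successful divisions summed over all $d$ is bounded by the local $\delta$-invariant of $g$, which is $O(\rho)$ under the ``sufficiently large precision'' hypothesis; the number of unsuccessful tests across all iterations is at most $m$, each costing $\softO(m\varphi \rho)$. Summing gives the announced $\softO(m^2\rho)$ bound, using $\varphi \le m$.

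The main obstacle is justifying the triangular structure (equivalently, that after each successful division we can choose $b_d$ so that its image $\tilde b_d$ has a $T$-valuation strictly larger than all previously introduced leading valuations modulo those of $\tilde b_0,\ldots,\tilde b_{d-1}$), and correctly relating the precision $\rho$ to the number of divisions so that truncated data still detects non-integrality. The latter is exactly what is meant by ``sufficiently large precision $\rho$'' in the statement, and it is enforced by taking $\rho$ larger than the local conductor; verifying these bookkeeping facts carefully is where I expect the bulk of the proof effort to go.
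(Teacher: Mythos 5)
Your approach is genuinely different from the paper's: you adapt van Hoeij's incremental linear-algebra strategy to the single-branch local setting, whereas the paper follows the B\"ohm et al.\ route of \emph{explicitly constructing} the basis numerators as monomials $\prod_k g_k^{\nu_k}$ in a family of $O(\log m)$ auxiliary polynomials $g_k$ (norms of successively truncated Puiseux expansions, computed via the NormRPE routine of Poteaux and Weimann), then deducing the denominators from valuations. The paper's route sidesteps linear algebra entirely, which is precisely what makes the $\softO(\rho m^2)$ bound drop out cleanly; your route must somehow show the linear systems have essentially no cost.

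The gap is in the triangularity claim, which you already flag as the crux but which I believe is actually false rather than merely hard to verify. The invariant you invoke --- that the $\tilde b_i$ acquire pairwise distinct leading $T$-exponents and that this characterizes a $K[[x]]$-basis of the normalization --- does not hold when $\varphi=[K_1:K]>1$. The normalization is $K_1[[T]]$ with $x=\gamma T^e$ and $e\varphi=m$; any $K[[x]]$-basis, e.g.\ $\{\omega_k T^j: 0\le j<e,\ 0\le k<\varphi\}$ with $(\omega_k)$ a $K$-basis of $K_1$, necessarily has $\varphi$ elements of each $T$-valuation $0,\dots,e-1$, so valuations repeat as soon as $K_1\neq K$. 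Moreover, even where leading exponents are distinct, your unknowns $a_i$ live in $K$ while the series coefficients live in $K_1$; restriction of scalars turns each vanishing condition on a $K_1$-coefficient into $\varphi$ scalar equations over $K$, so the system is $c\varphi \times d$ and triangularity in $T$-degree does not transfer to triangularity of the $K$-linear system. This also contaminates the cost accounting: a single truncated Puiseux series at $x$-precision $\rho$ has $e\rho$ terms in $K_1$, i.e.\ $e\varphi\rho=m\rho$ scalars over $K$, so a linear combination of $d$ of them costs $\softO(dm\rho)$, not $\softO(d\varphi\rho)$; summed over the $m$ unsuccessful tests alone this gives $\softO(m^3\rho)$, already exceeding the target. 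To rescue the approach one would need both a corrected structural argument in place of the valuation invariant and a way to avoid paying the full $m\rho$ series cost per test --- the paper avoids both problems by never setting up linear systems at all.
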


As in van Hoeij's algorithm, the idea is to compute for any $1\le d < m$ a
polynomial $p_d\in K[x][y]$ and an integer $e_d$ such that $p_d(x,y)/x^{e_d}$
is integral and $e_d$ is maximal. However, the building process is quite
different. We clarify this notion of maximality in the following definition.

\begin{definition}\label{def:dmax}
  Let $P\in K[x][y]$ be a degree-$d$ monic polynomial (in $y$). We say that $P$
  is $d$-maximal if there exists an exponent $e_d$ such that $P(x,y)/x^{e_d}$
  is integral and such that there is no degree-$d$ monic polynomial $Q$
  satisfying $Q(x,y)/x^{e_d+1}$.
\end{definition}

\begin{remark} We introduce the notion of $d$-maximality for the sake of
clarity and brevity. To the best of our knowledge this notion has not received
a standard name in the literature and was often referred to using the word
maximal.\end{remark}

Let us consider the $m$ Puiseux expansions $\gamma_i$ of $g$. Since $g$ is
irreducible, these expansions are conjugated but let us first make a stronger
assumption : there exists a $t\in\Q$ such that all the terms of degree lower
than $t$ of the expansions $\gamma_i$ are equal and the terms of degree $t$ are
conjugate. We truncate all these series by ignoring all terms of degree greater
or equal to $t$. This way, all the expansions share the same truncation
$\overline{\gamma}$.

\begin{lemma}\label{lem:intexp}\cite[Lemma~7.5]{bohm}
  Using the notation and hypotheses of previous paragraph, for any $1\le d < m$
  the polynomial $p_d=(y-\overline{\gamma})^d$ is $d$-maximal.
\end{lemma}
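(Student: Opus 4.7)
The plan is to combine a direct Puiseux-series valuation computation (for the "exists" part of $d$-maximality) with a Galois-orbit argument (for the non-improvability part), using the standard criterion that an element of $\mathrm{Frac}(K[[x]][y]/\langle g\rangle)$ is integral over $K[[x]]$ iff its evaluation at every Puiseux root has nonnegative $v_x$.

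First I would set $e_d := \lfloor dt\rfloor$ and check that $(y-\overline{\gamma})^d/x^{e_d}$ is integral. By hypothesis, every Puiseux root $\gamma_i$ shares the truncation $\overline{\gamma}$ in every term of degree $<t$ and differs from it in degree $t$ by some nonzero coefficient $c_i x^t$, so $v_x(\gamma_i-\overline{\gamma})=t$ for every $i$. Consequently $v_x((\gamma_i-\overline{\gamma})^d/x^{e_d})=dt-\lfloor dt\rfloor\ge 0$, which is exactly what the Puiseux integrality criterion asks for.

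For maximality, I would argue by contradiction. Suppose there exists a monic $Q\in K[x][y]$ of degree $d$ with $Q(x,y)/x^{e_d+1}$ integral. Let $\beta_1,\dots,\beta_d\in\overline{K((x))}$ be the roots of $Q$; since $Q\in K[x][y]\subset K((x))[y]$, this multiset is stable under $G=\mathrm{Gal}(\overline{K((x))}/K((x)))$. The key claim is:
\[
v_x(\beta_j-\gamma_i)\le t \quad\text{for every } i\in\{1,\dots,m\}\text{ and } j\in\{1,\dots,d\}.
\]
To prove it, suppose some $\beta_j$ satisfies $v_x(\beta_j-\gamma_{i_0})>t$. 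Under the lemma's hypothesis the $\gamma_i$ are pairwise $t$-separated, i.e.\ $v_x(\gamma_{i_0}-\gamma_i)=t$ for $i\ne i_0$; the strict ultrametric inequality then forces $v_x(\beta_j-\gamma_i)=t<v_x(\beta_j-\gamma_{i_0})$ for every $i\ne i_0$, so $\gamma_{i_0}$ is the \emph{unique} branch at distance greater than $t$ from $\beta_j$. For any $\sigma\in\mathrm{Stab}_G(\beta_j)$, since $\sigma$ preserves $v_x$ we obtain $v_x(\beta_j-\sigma(\gamma_{i_0}))=v_x(\sigma(\beta_j-\gamma_{i_0}))>t$, and by uniqueness $\sigma(\gamma_{i_0})=\gamma_{i_0}$. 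Thus $\mathrm{Stab}_G(\beta_j)\subseteq\mathrm{Stab}_G(\gamma_{i_0})$, whence $|G\cdot\beta_j|\ge |G\cdot\gamma_{i_0}|=m$. But the $G$-orbit of $\beta_j$ is contained in the multiset $\{\beta_1,\dots,\beta_d\}$ of cardinality $d<m$, contradiction.

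From the claim, summing over the roots of $Q$ gives
\[
v_x\bigl(Q(x,\gamma_i)\bigr)=\sum_{j=1}^{d}v_x(\gamma_i-\beta_j)\le dt<\lfloor dt\rfloor+1=e_d+1,
\]
contradicting the assumed integrality of $Q/x^{e_d+1}$. Hence $(y-\overline{\gamma})^d$ is $d$-maximal, as required.

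The main obstacle is the key claim: translating a Puiseux valuation inequality into a statement about the \emph{size} of a Galois orbit. This rests delicately on the uniqueness of the ``closest branch'', and hence on reading the hypothesis ``terms of degree $t$ are conjugate'' as saying that the degree-$t$ coefficients $c_i$ form a full Galois orbit of size $m$ (so the $\gamma_i$ are pairwise $t$-separated). If the $c_i$'s only form a smaller orbit with nontrivial stabilizer, one would need to refine the argument so that the pertinent orbit bound uses the finer Puiseux regularity index rather than raw orbit sizes; for the simplified hypothesis made just before the lemma, the argument as sketched suffices.
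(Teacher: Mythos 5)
Your proof is correct under the paper's ``stronger assumption'' that a single truncation at level $t$ separates all $m$ branches, and you are right to self-flag that the key ultrametric/uniqueness step uses exactly this (so that $v_x(\gamma_{i_0}-\gamma_i)=t$ for all $i\neq i_0$); in the general case handled later in the section, where only partial separation occurs, the argument would indeed need the recursive/semigroup machinery rather than a raw orbit count. Two small points worth noting explicitly: $\overline{\gamma}$ really is a polynomial in $K[x]$ because any fractional-exponent or non-$K$-rational term below degree $t$ would break the hypothesis that all $m$ conjugates agree there, and the orbit-stabilizer step is legitimate for the profinite $G=\mathrm{Gal}(\overline{K((x))}/K((x)))$ since all orbits involved are finite.

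As for the comparison: the paper itself gives no proof, deferring to B\"ohm et al.\ (Lemma~7.5 there). Their development goes through the valuation-theoretic side of plane-branch singularity theory: they organize the computation around the order function $\mathrm{ord}$ on the branch, the associated semigroup, and explicit manipulation of truncated Puiseux expansions, and the maximality of $(y-\overline{\gamma})^d$ is obtained from that structural analysis (essentially, by controlling how contributions $c_j(x)(y-\overline{\gamma})^j$ can and cannot cancel at the leading exponent $dt$). Your route is genuinely different and, I think, cleaner for this special case: you bypass all of the semigroup bookkeeping by observing that a monic degree-$d$ competitor $Q\in K[x][y]$ that approximates one branch $\gamma_{i_0}$ to order $>t$ must, by Galois-equivariance of $v_x$ and uniqueness of the nearest branch, have a root whose $G$-orbit is at least as large as that of $\gamma_{i_0}$ --- namely of size $m$ --- which is impossible when $\deg_y Q=d<m$. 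This turns the valuation estimate into a short counting argument. What you lose is uniformity: the semigroup-based proof extends directly to the nested truncations $g_0,g_1,\dots$ used later in Section~\ref{sec:normbranch}, whereas your orbit bound would need to be re-posed at each level of the recursion (replacing $m$ by the orbit size of the relevant truncated expansion, i.e.\ the regularity/ramification data at that level), exactly as you anticipate in your closing remark.
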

\begin{proof} See \cite{bohm}. \end{proof}

In a more general setting, more truncations are iteratively performed so as to
fall back in the previous case. We recall below the strategy followed
in~\cite{bohm} for the sake of completeness.

Initially we have $g_0=g=\prod_{i=1}^m (y-\gamma_i)$. We compute the smallest
exponent $t$ such that the expansions $\gamma_i$ are pairwise different. We
truncate the expansions to retain only the exponents smaller than $t$ and
denote these truncations $\gamma_j^{(1)}$. Among these expansions, we extract a
set of $r$ mutually distinct expansions which we denote by $\eta_i$. Note that
by local irreducibility, each of these expansions correspond to the same number
of identical $\gamma_j^{(1)}$. We further denote $\overline{g_0}=\prod_{i=1}^m
(y-\gamma_i^{(1)})$ and $g_1=\prod_{i=1}^r (y-\eta_i)$ and $u_1=m/r$. We
actually have $\overline{g_0}=g_1^{u_1}$.

We recursively repeat the operation: starting from a polynomial
$g_{j-1}\prod_{i=1}^{r_{i-1}} (y-\eta_i)$, we look for the first exponent such
that all the truncations of the $\eta_i$ are pairwise different. Truncating
these expansions up to exponent strictly smaller, we compute
$\overline{g_{j-1}}=\prod_{i=1}^{m_j} (y-\gamma_i^{(j)})$. Once again we retain
only one expansion per set of identical truncations and we define a
$g_j=\prod_{i=1}^{r_j} (y-\eta_i)$ and $u_j=m_j/r_j$.

The numerators of the integral basis that the algorithm shall return are
products of these $g_i$'s. Speaking very loosely, the $g_i$ have decreasing
degrees in $y$ and decreasing valuations so for a fixed $d$ the denominator
$p_d$ is chosen of the form $\prod g_i^{\nu_i}$ where the $\nu_i$'s are
incrementally built as follow : $\nu_1$ is the largest integer such that
$\deg_y(g_1^{\nu_1}) \le d$ and $\nu_1\le u_1$, then $\nu_2$ is the largest
integer such that $\deg_y(g_1^{\nu_1}g_2^{\nu_2})\le d$ and $\nu_2\le u_2$, and
so on. This is Algorithm 6 of~\cite{bohm}, we refer to the proof
of~\cite[Lemma~7.8]{bohm} for a proof of exactness.

Since we assumed that we are treating a singularity at $0$, the denominators
are powers of $x$. The proper exponents are deduced in the following way: for
each $g_i$ we keep in memory the set of expansions that appear, we denote this
set by $N_{g_i}$. Then for any $\gamma$ in the set $\Gamma$ of all Puiseux
expansions of $g$ we compute $\sigma_i=\sum_{\eta\in N_{g_i}}v(\gamma-\eta)$
which does not depend on the choice of $\gamma\in\Gamma$. For any $j$, if
$p_j=\prod_k g_k^{\nu_k}$ then the exponent $e_j$ of the denominator is given
by $\left\lfloor\sum_k \nu_k\sigma_k\right\rfloor$. Further justifications of
this are given in~\cite{bohm}.

\paragraph*{Complexity analysis.}

Let us now give a proof of Proposition~\ref{prop:brch}. To do so, remark that
the $g_k$'s are polynomials whose Puiseux series are precisely the truncation
$\eta_i$'s of the above $\gamma_j^{(i)}$. Equivalently, one can say that the
$g_k$'s are the norms of the Puiseux expansions $\eta_i$'s.

To compute them, we can appeal to the Algorithm NormRPE of Poteaux and
Weimann~\cite[Section~4.1.]{PoWe17}. Suppose we know all the expansions
involved up to precision $\rho$ sufficiently large. These expansions are not
centered at $(0,\infty)$ because $g$ is monic. Therefore, the hypotheses
of~\cite[Lemma~8]{PoWe17} are satisfied and the algorithm NormRPE compute each
of the $g_i$'s above in time $\softO(\rho\deg_y(g_i)^2)$.

Then we remark that the total number of such $g_i$'s is in $O(\log m)$. Indeed,
at each step the number of expansions to consider is at least halved (Puiseux
expansions are grouped according to their truncations being the same, at least
two series having the same truncation). Since the degree of each $g_i$ is no
greater than $m-1$, all these polynomials can be computed in $\softO(m^2\rho)$
operations in $K$.

Once the $g_i$'s are known we can deduce the numerators $p_i$'s as explained
above. Building them incrementally starting from $p_1$ each $p_i$ is either
equal to a $g_j$ or can be expressed as one product of quantities that were
already computed (either a $g_j$ or a $p_k$ for $k<i$). Therefore, computing
all the numerators amounts to computing at most $m$ products of polynomials
whose degrees are bounded by $m$ over $K[x]/\langle x^\rho\rangle$. Using
Sch\"onhage-Strassen's algorithm for these products the total cost is in
$\softO(\rho m^2)$ operations in $K$. The computation of denominators then has
a negligible cost. This concludes the proof.

\subsection{Branch-wise splitting for integral bases}\label{sec:branchsplit}

Once again, let us assume that we are treating the local contribution at the
singularity $x=0$. In the setting of van Hoeij's algorithm, this corresponds to
dealing with a single irreducible factor of the discriminant. We further divide
the problem by considering the factorization $f=f_0\prod_{i=1}^rf_i$, where
$f_0$ is a unit in $K[ [x] ][y]$ and the other $f_i$'s are irreducible
Weierstrass polynomials in $K[ [x] ][y]$. 

We can apply the results from the previous section to each $f_i$ for $i>0$ in
order to compute an integral basis of $K[ [x] ][y]/\langle f_i\rangle$. In this
section, we deal with two problems: we explain how to compute the factorization
of $f$ and how to efficiently perform an analogue of the Chinese Remainder
Theorem to compute an integral basis of $K[ [x] ][y]/\langle f_1\cdots
f_r\rangle$ from the integral bases at each branch. For the sake of
completeness, we recall in Section~\ref{sec:f0}, how B\"ohm et al. take $f_0$
into account and deduce a minimal local contribution at any given singularity.

\begin{proposition}\label{prop:facsplit}\cite[Proposition~5.9]{bohm}
  Let $f_1$,\ldots,$f_r$ be the irreducible Weierstrass polynomials in $K[ [x]
][y]$ appearing in the factorization of $f$ into branches. Let us set
$h_i=\prod_{j=1, j\ne i}f_j$. Then the $f_i$ and $h_i$ are coprime in $K(
(x))[y]$ so that there are polynomials $a_i, b_i$ in $K[ [x] ][y]$ and positive
integers $c_i$ such that $a_if_i+b_ih_i=x^{c_i}$ for any $1\le i\le r$.

  Furthermore, the normalization of $K[ [x] ][y]/(f_1\cdots f_r)$ splits as

  \[ \overline{K[ [x] ][y]/\langle f_1\cdots f_r\rangle} \cong \bigoplus_{i=1}^r\overline{K[ [x] ][y]/\langle f_i\rangle} \]
  and the splitting is given explicitly by

  \[(t_1\bmod f_1, \ldots,t_r\bmod f_r)\mapsto \sum_{i=1}^r\frac{b_ih_it_i}{x^{c_i}}\bmod f_1\cdots f_r.\]
\end{proposition}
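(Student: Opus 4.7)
The plan is to prove the two parts in sequence, using the Bézout relation to construct explicit idempotents that realize the Chinese Remainder decomposition, and then transferring that decomposition to the normalizations. Since the $f_i$ are distinct irreducible elements of the UFD $K[ [x] ][y]$, Gauss's lemma ensures they remain irreducible (and pairwise non-associate, since each is monic in $y$) in the PID $K((x))[y]$. Consequently $f_i$ and $h_i=\prod_{j\neq i}f_j$ generate the unit ideal of $K((x))[y]$, so there is an identity $\tilde a_if_i+\tilde b_ih_i=1$ in $K((x))[y]$; clearing denominators by the smallest power of $x$ that works yields the stated relation $a_if_i+b_ih_i=x^{c_i}$ in $K[ [x] ][y]$. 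Since every root of each $f_i$ vanishes at $x=0$ (as the $f_i$ are Weierstrass), no $f_i$ can already be coprime to $h_i$ in $K[ [x] ][y]$, so $c_i>0$.

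Setting $e_i:=b_ih_i/x^{c_i}$ in the total quotient ring $K((x))[y]/\langle f_1\cdots f_r\rangle$, the Bézout identity forces $b_ih_i\equiv x^{c_i}\pmod{f_i}$, while $f_j\mid h_i$ gives $b_ih_i\equiv 0\pmod{f_j}$ for $j\neq i$. Dividing through by $x^{c_i}$, the $e_i$ are pairwise orthogonal idempotents with $e_i\equiv \delta_{ij}\pmod{f_j}$ and $\sum_ie_i=1$. This is exactly the Chinese Remainder data, yielding an isomorphism of $K((x))$-algebras
\[K((x))[y]/\langle f_1\cdots f_r\rangle \;\cong\; \bigoplus_{i=1}^r K((x))[y]/\langle f_i\rangle,\]
whose inverse is explicitly $(t_i)\mapsto\sum_i e_it_i$.

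To transfer this to normalizations, write $R=K[ [x] ][y]/\langle f_1\cdots f_r\rangle$ and observe that each idempotent $e_i$ satisfies $X^2-X=0$, hence $e_i\in\overline R$. It follows that $\overline R=\bigoplus_i e_i\overline R$ as a ring; each summand $e_i\overline R$ is integrally closed in its field of fractions $K((x))[y]/\langle f_i\rangle$ and, via the CRT isomorphism, contains the image of $R/\langle f_i\rangle\cong K[ [x] ][y]/\langle f_i\rangle$, so by uniqueness of the integral closure it must coincide with $\overline{K[ [x] ][y]/\langle f_i\rangle}$. The main delicate point, and the one worth double-checking, is that the expression $\sum_ib_ih_it_i/x^{c_i}$—which a priori only makes sense in the total quotient ring—genuinely defines an element of $\overline R$ for every $t_i\in\overline{K[ [x] ][y]/\langle f_i\rangle}$; this is guaranteed by the idempotence of $e_i$, which forces the denominator $x^{c_i}$ to cancel against $b_ih_i$ inside the $i$-th branch while leaving the other components untouched.
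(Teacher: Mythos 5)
Your argument is essentially correct, and it is a genuinely self-contained proof: the paper itself gives no proof for this statement but only cites~\cite[Theorem~1.5.20]{de2013local} (the proposition is also quoted from B\"ohm et al.). What you wrote is the standard argument behind that reference: coprimality in the PID $K((x))[y]$ via Gauss's lemma, a B\'ezout relation cleared to $x^{c_i}$, orthogonal idempotents $e_i=b_ih_i/x^{c_i}$ realizing the CRT splitting of the total ring of fractions, and the observation that idempotents are integral so that $\overline R=\bigoplus_i e_i\overline R$ with each $e_i\overline R$ integrally closed in the $i$-th component.

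A couple of small points you should tighten. In the final ``uniqueness of integral closure'' step you use, but do not state, that $e_i\overline R$ is \emph{integral over} $K[[x]][y]/\langle f_i\rangle$; this is immediate since $e_i\overline R$ is the image of $\overline R$ under the projection and $\overline R$ is integral over $R$, whose image is precisely $K[[x]][y]/\langle f_i\rangle$, but it is the hypothesis that makes the uniqueness argument valid and should be said. Also, your justification that $c_i>0$ (the common zero of $f_i$ and $h_i$ at $(0,0)$) tacitly assumes $r\ge 2$; when $r=1$ one has $h_1=1$ and $c_1=0$, which is harmless for the isomorphism but not ``positive'' as the statement asserts, so it is worth flagging that degenerate case. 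Finally, in the closing paragraph the claim that $\sum_i b_ih_it_i/x^{c_i}\in\overline R$ is better argued by noting that $(0,\ldots,t_i,\ldots,0)$ satisfies a monic equation over $R[e_1,\ldots,e_r]\subset\overline R$ obtained by lifting the coefficients of a monic equation for $t_i$ and multiplying the constant term by $e_i$, rather than by the informal ``denominator cancels'' phrasing.
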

\begin{proof} See~\cite[Theorem~1.5.20]{de2013local}. \end{proof}

The following corollary will be used in practice to recover an integral basis for $\overline{K[ [x] ][y]/\langle f_1\cdots f_r\rangle}$.

\begin{proposition}\label{prop:gluebases}\cite[Corollary~5.10]{bohm}
  With the same notation, let
  \[\left(1,\frac{p_1^{(i)}(x,y)}{x^{e_1^{(i)}}},\ldots,\frac{p^{(i)}_{m_i-1}(x,y)}{x^{e^{(i)}_{m_i-1}}}\right)\]
  represent an integral basis for $f_i$, where each $p^{(i)}_j\in K[x][y]$ is a
  monic degree-$j$ polynomial in $y$. For $1\le i\le r$, set
  \[\mathcal{B}^{(i)}=\left(\frac{b_ih_i}{x^{c_i}},\frac{b_ih_ip_1^{(i)}}{x^{c_i+e^{(i)}_1}},\ldots,\frac{b_ih_ip_{m_i-1}^{(i)}}{x^{c_i+e^{(i)}_{m_i-1}}}\right)
  .\] Then $\mathcal{B}^{(1)} \cup \cdots\cup\mathcal{B}^{(r)}$ is an integral
  basis for $f_1\cdots f_r$.
\end{proposition}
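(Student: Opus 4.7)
The plan is to derive this statement as a direct transport of structure from Proposition~\ref{prop:facsplit}. The key observation is that in the direct sum $\bigoplus_{i=1}^r \overline{K[[x]][y]/\langle f_i\rangle}$, the natural integral basis is obtained by taking, for each $i$, the integral basis of the $i$-th summand $\bigl(1, p_1^{(i)}/x^{e_1^{(i)}}, \ldots, p_{m_i-1}^{(i)}/x^{e_{m_i-1}^{(i)}}\bigr)$ and embedding its elements into the direct sum by padding with zeros in the other components. Summing cardinalities gives $\sum_{i=1}^r m_i = \deg(f_1\cdots f_r) = n$, which is the expected rank of the normalization.

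First, I would apply the explicit isomorphism from Proposition~\ref{prop:facsplit} componentwise. The basis element of the $i$-th summand corresponding to $p_j^{(i)}/x^{e_j^{(i)}}$ is the tuple whose only nonzero entry is in position $i$. Plugging this tuple into the formula $(t_1,\ldots,t_r)\mapsto \sum_i b_i h_i t_i / x^{c_i} \bmod f_1\cdots f_r$ immediately yields $b_i h_i p_j^{(i)}/x^{c_i+e_j^{(i)}} \bmod f_1\cdots f_r$, which is exactly the corresponding entry of $\mathcal{B}^{(i)}$. In particular, the element corresponding to $1$ in the $i$-th summand maps to $b_i h_i / x^{c_i}$, accounting for the first entry of each $\mathcal{B}^{(i)}$.

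Since Proposition~\ref{prop:facsplit} provides an isomorphism of $K[[x]][y]$-modules (hence of $K[[x]]$-modules), it sends an integral basis of the right-hand side to an integral basis of the left-hand side. Applying this to the basis obtained by concatenating the embedded integral bases of each summand shows that $\mathcal{B}^{(1)}\cup\cdots\cup\mathcal{B}^{(r)}$ is a $K[[x]]$-module basis of $\overline{K[[x]][y]/\langle f_1\cdots f_r\rangle}$.

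The only subtlety to spell out, rather than a genuine obstacle, is to check that the indexing on the direct sum matches what the map does: reading off the formula $\sum_i b_i h_i t_i / x^{c_i}$ against the tuple with $t_i = p_j^{(i)}/x^{e_j^{(i)}}$ and $t_k = 0$ for $k\neq i$ produces precisely one term, and the exponents of $x$ add (since we can clear denominators). Once that is observed, the proposition is just a restatement of Proposition~\ref{prop:facsplit} in terms of explicit bases, so no further argument is required.
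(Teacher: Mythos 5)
Your argument is correct. The paper itself gives no proof for this proposition and merely cites \cite[Corollary~5.10]{bohm}; your derivation is the natural one and fills that gap: the right-hand side of Proposition~\ref{prop:facsplit} is a direct sum of $K[[x]]$-modules, each equipped with the given integral basis, and the union of those bases (embedded with zeros in the other coordinates) is a $K[[x]]$-basis of the sum; since the explicit splitting map of Proposition~\ref{prop:facsplit} is visibly $K[[x]]$-linear, it carries that basis to $\mathcal{B}^{(1)}\cup\cdots\cup\mathcal{B}^{(r)}$, and one reads off each image directly from the formula $\sum_i b_ih_it_i/x^{c_i}$ with a single nonzero $t_i=p_j^{(i)}/x^{e_j^{(i)}}$. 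One minor wording point: the isomorphism is between two different quotients, so calling it a morphism of ``$K[[x]][y]$-modules'' is a slight abuse; what you actually use, and correctly state in the parenthetical, is that it is an isomorphism of $K[[x]]$-modules (indeed of $K[[x]]$-algebras), which is what guarantees bases map to bases.
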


In~\cite{bohm}, these results are not used straightforwardly because the authors
remarked that it was time-consuming in practice. Instead, the $c_i$'s are
computed from the singular parts of the Puiseux expansions of $f$ and
polynomials $\beta_i$ replace the $b_i$'s, playing a similar role but
being easier to compute. 

Indeed, these $\beta_i$'s are computed in~\cite[Algorithm~8]{bohm} and they are
actually products of the polynomials $g_i$'s already computed
by~\cite[Algorithm~7]{bohm}, which is the algorithm that we detailed above to
describe the computation of an integral basis for each branch. The only new
thing to compute in order to deduce the $\beta_i$'s are the suitable exponents
of the $g_i$'s. This is achieved through solving linear congruence equations.
This step can be fast on examples considered in practice and we also note
that the $\beta_i$'s seem more convenient to handle because they are in
$K[x][y]$ and they contain less monomials than the $b_i$'s. However the
complexity of this problem (often denoted LCON in the literature) has been
widely studied, see for example~\cite{arvind2005,de2012} but, to the best of
our knowledge, none of the results obtained provide bounds that we could use
here.

For the sake of complexity bounds, we therefore suggest another way which is
based on computing the $b_i$'s of Proposition~\ref{prop:facsplit}. We also
compute the factorization of $f$ into branches in a different way: instead of
following the algorithms of~\cite[Section~7.3 \& 7.4]{bohm} we make direct
use of the factorization algorithm of Poteaux and Weimann~\cite{PoWe17} so we
also invoke their complexity result~\cite[Theorem~3]{PoWe17} which is recalled
below. Another advantage to this is that we will see that the $b_i$'s can
actually be computed using a subroutine involved the factorization algorithm,
which simplifies even further the complexity analysis.

\begin{proposition}\label{prop:facto}\cite[Theorem~3]{PoWe17}
There exists an algorithm that computes the irreducible factors of $f$
in $K[ [x] ][y]$ with precision $N$ in an expected $\softO(\deg_y(f)(\delta+N))$
operations in $K$ plus the cost of one univariate factorization of degree at
most $\deg_y(f)$, where $\delta$ stands for the valuation of $\Disc(f)$.
\end{proposition}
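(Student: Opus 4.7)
The plan is to prove this by a Newton--Puiseux/Hensel hybrid, following the classical Newton polygon paradigm but with two features tailored to the stated complexity: all characteristic-polynomial factorizations are folded into a \emph{single} univariate factorization of degree $\le\deg_y(f)$, and the final precision boost to $N$ is handled separately from the combinatorial recursion so that the $\softO(\deg_y(f)(\delta+N))$ bound splits cleanly as $\softO(n\delta)$ for the recursive phase plus $\softO(nN)$ for Hensel lifting. Throughout, write $n=\deg_y(f)$ and $\delta=v_x(\Disc(f))$.

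First I would set up the recursion. At each node, one has a monic polynomial $g\in K[[x]][y]$ (initially $g=f$), compute its Newton polygon with respect to the $x$-adic valuation, and for each edge of slope $-\mu/e$ (in lowest terms) extract the residual polynomial $\chi$ of degree equal to the horizontal length of the edge divided by $e$. The product decomposition of $g$ by edges is standard, so it suffices to treat one edge at a time. If $\chi$ is separable, its roots are in bijection with the irreducible factors of $g$ associated to that edge: for each root $c$ one performs the substitution $y=x^{\mu/e}(c+z)$, obtaining a polynomial whose Newton polygon starts above the $x$-axis, and one enters the Hensel regime. If $\chi$ is not separable, one still factors it (contributing to the univariate factorization budget) and recurses on each substituted polynomial $g(x^e, x^\mu(c+z))/x^{\mathrm{shift}}$ per irreducible factor of $\chi$.

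Second, I would establish the key bookkeeping identity that underlies the ``single univariate factorization'' claim. The characteristic polynomials encountered across the tree are pairwise coprime in a controlled sense: each one corresponds to a distinct part of the squarefree decomposition of an auxiliary degree-$n$ polynomial built from the leading residues along the recursion. Consequently, assembling them into one polynomial of degree at most $n$ and factoring it once suffices, and by dynamic extraction one can cheaply read off each individual $\chi$ when needed. The arithmetic cost of Newton-polygon computation, residual extraction, and the Newton substitution $y\mapsto x^{\mu/e}(c+z)$ at a node of $y$-degree $m$ working modulo $x^\rho$ is $\softO(m\rho)$ using Sch\"onhage--Strassen arithmetic on truncated power series.

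Third, once the recursion reaches nodes where residuals are separable and each factor has been isolated, I would Hensel-lift all factors simultaneously to precision $N$. Classical quadratic Newton iteration on a factorization $g=g_1\cdots g_s$ with $\sum\deg g_i = n$ costs $\softO(nN)$ field operations to reach precision $N$, since doubling the precision at each step uses fast polynomial arithmetic in $y$ modulo the current factors and truncation in $x$.

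The main obstacle is the amortized complexity analysis of the recursive phase: showing that the sum over all nodes of (the $y$-degree $m$ at the node) times (the $x$-precision $\rho$ handled at the node) is in $\softO(n\delta)$. This is where the hypothesis $\delta=v_x(\Disc(f))$ enters decisively. One shows that the valuation of the discriminant of a subproblem strictly decreases when recursing past a non-separable residual, that these decreases partition across sibling subtrees, and that the total drop from root to all leaves is $\delta$; combined with the fact that the $y$-degrees of sibling subproblems sum to at most the parent's $y$-degree, one obtains the $\softO(n\delta)$ bound. Getting the constants right in this amortization --- in particular, arguing that the precision $\rho$ needed at a node is controlled by the remaining discriminant valuation on its subtree rather than by some a priori worst-case quantity --- is the delicate point and the reason Poteaux and Weimann's analysis goes through where the naive Newton--Puiseux expansion would only yield an $\softO(n^2\delta)$-type bound.
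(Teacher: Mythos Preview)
The paper does not prove this proposition at all: its proof is the single line ``See~[PoWe17, Section~7].'' The result is quoted from Poteaux and Weimann's paper as a black box, so there is nothing in the present paper to compare your argument against.

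That said, your sketch is broadly faithful to the actual Poteaux--Weimann strategy (Newton polygon recursion to isolate branches, then Hensel lifting to precision $N$, with the $\softO(n\delta)$ bound coming from an amortization over the tree controlled by the discriminant valuation). Two points are worth flagging. First, your justification for the \emph{single} univariate factorization is the weakest link: the claim that the characteristic polynomials across the tree can be ``assembled into one polynomial of degree at most $n$'' over $K$ is not how the reduction works, since after the first step the residual polynomials live over successive algebraic extensions of $K$ rather than over $K$ itself. In Poteaux--Weimann, the single factorization arises from a different mechanism (essentially an initial factorization combined with dynamic evaluation to avoid further genuine factorizations), and the paper under review alludes to this in its final remark about the D5 principle and directed evaluation. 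Second, the statement gives an \emph{expected} complexity, which signals a Las Vegas component; your sketch does not indicate where randomness enters or why the bound is only in expectation.
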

\begin{proof} See~\cite[Section~7]{PoWe17}. \end{proof}

Let us now get back to the first steps of Algorithm~\ref{algo:bohm}: we have to
compute $E(f)$ to assess up to what precision we should compute the Puiseux
series and then compute the factorization of $f$, the integers $c_i$ and the
polynomials $b_i$.

In each section, we tried to keep the notation of the original papers as much
as we could which is why we introduced $E(f)$ but the definition given
in~\cite[Section~4.8]{bohm} is exactly the same as the $N$ in van Hoeij's
paper~\cite{vh}. This bound can be directly computed from the singular part of
the Puiseux expansions of $f$. We recall its definition: $E(f)=\max_i\sum_{i\ne
j} v(\gamma_i-\gamma_j)$, where the $\gamma_i$'s are the Puiseux expansions of
$f$. We will see later on an alternate definition which will make it easier to
bound $E(f)$.

Following~\cite{bohm}, we need to compute the factorization of $f$ into
branches up to precision $E(f)+c_i$. Using Poteaux and Weimann's factorization
algorithm from Proposition~\ref{prop:facto}, we can compute the factors $f_i$
up to the desired precision. 

Furthermore, using a subroutine contained within this algorithm, we can compute
the B\'ezout relation $a_if_i+b_ih_i=x^{c_i}$ up to precision $E(f)+c_i$. This
is detailed in~\cite[Section~4.2]{PoWe17}, where our $c_i$ is the lifting
order $\kappa$ and our $f_i$ and $h_i$ are respectively the $H$ and $G$ of
Poteaux and Weimann. The algorithm used to compute the B\'ezout relations is
due to Moroz and Schost~\cite{MoSc} and its complexity is given
by~\cite[Corollary~1]{MoSc}. 

\paragraph*{Complexity analysis.}

We analyze the cost of the computations performed in this section and summarize
them by the following proposition.

\begin{proposition}\label{prop:splitcplx}
  Let $f(x,y)$ be a degree-$n$ monic squarefree polynomial in $y$ and let
  $\delta$ be the $x$-valuation of $\Disc(f)$. Then the integers $c_i$'s and
  $E(f)$, a factorization in branches $f=f_0\prod_{i=1}^rf_i$ as well as the
  polynomials $a_i$'s and $b_i$'s or Proposition~\ref{prop:facsplit} can be
  computed up to precision $E(f)+c_i$ for a univariate factorization degree $n$
  over $K$ and a total of $\softO(n^2\delta)$ field operations.
\end{proposition}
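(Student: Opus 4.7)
The plan is to first establish a priori bounds showing that both $E(f)$ and each $c_i$ lie in $O(\delta)$, so that the lifting precision $N = E(f) + c_i$ stays in $O(\delta)$ throughout. For $E(f)$, the identity $\Disc(f) = \prod_{i<j}(\gamma_i - \gamma_j)^2$ (where the $\gamma_i$'s are the Puiseux series of $f$) gives $\delta = \sum_i \sum_{j \neq i} v(\gamma_i - \gamma_j)$, and the maximum over $i$ is bounded by the full sum, so $E(f) \leq \delta$. For the $c_i$'s, since $x^{c_i}$ generates the ideal $(f_i, h_i) \cap K[[x]]$, one has $c_i \leq v_x(\mathrm{Res}_y(f_i, h_i))$; using that $f_0$ is a unit in $K[[x]][y]$ and that $\mathrm{Res}_y(f_i, h_i)^2$ divides $\Disc(f)$ up to a unit in $K[[x]]$, this yields $c_i \leq \delta/2$.

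Next I would invoke Proposition~\ref{prop:facto} with precision $N \in O(\delta)$ to compute the branch factorization $f = f_0 \prod_{i=1}^r f_i$ for the cost of one univariate factorization of degree $\le n$ plus $\softO(n(\delta + N)) = \softO(n\delta)$ field operations. The singular parts of the Puiseux expansions are produced as a byproduct, so $E(f)$ can be read off directly once the factorization is known, confirming a posteriori that the chosen precision was sufficient.

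Finally I would compute the Bézout triples $(a_i, b_i, c_i)$ one at a time using the Moroz--Schost lifting subroutine invoked in~\cite[Section~4.2]{PoWe17}, which yields the identity $a_i f_i + b_i h_i = x^{c_i}$ modulo $x^N$. Applying~\cite[Corollary~1]{MoSc} to each pair of polynomials of combined degree at most $n$ at precision $N \in O(\delta)$ costs $\softO(n\delta)$ per pair; summing over the $r \le n$ branches contributes $\softO(n^2\delta)$ operations, which dominates the factorization step and matches the announced bound.

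The hard part is the first step: pinning down the bound $c_i \leq \delta/2$ cleanly enough to justify invoking both Proposition~\ref{prop:facto} and the Moroz--Schost lifting at a precision that is already linear in $\delta$. Without such an a priori control one would be forced to proceed iteratively, doubling the precision until the lifting closes, and the overall complexity would be considerably more delicate to analyze.
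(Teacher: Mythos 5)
Your proof is correct and follows essentially the same route as the paper: compute the branch factorization via Proposition~\ref{prop:facto}, compute the B\'ezout relations via the Moroz--Schost subroutine from~\cite[Section~4.2]{PoWe17} and~\cite[Corollary~1]{MoSc}, and control the working precision by showing $E(f)$ and the $c_i$'s are $O(\delta)$. The only real difference is how you justify those a priori bounds: you derive $E(f)\le\delta$ directly from $v_x(\Disc f)=\sum_i\sum_{j\ne i}v(\gamma_i-\gamma_j)$ and $c_i\le\delta/2$ from $\Res_y(f_i,h_i)^2$ dividing $\Disc(f)$, whereas the paper reads $E(f)=e_{n-1}\le\delta$ off~\cite[Definition~4.14]{bohm} and bounds $c_i\le v_x(\partial f/\partial y)\le\delta$ via~\cite[Proposition~8]{PoWe17}; the ``hard part'' you flag at the end is thus already handled in the literature, and your standalone derivation is a perfectly good substitute for those citations.
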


\begin{proof}

  First, the singular parts of the Puiseux series of $f$ above $0$ are computed
  for $\softO(n\delta)$ field operations by~\cite[Theorem~1]{PoWe17}.
  This allows us to compute $E(f)$.

Then we compute the factorization in branch up to a sufficient precision to
compute the $c_i$'s. We then extend the precision further so as to compute the
factorization and the B\'ezout relations $a_if_i+b_ih_i=x^{c_i}$ up to
precision $E(f)+c_i$.

Invoking~\cite[Corollary~1]{MoSc}, computing a single B\'ezout relation up to
precision $E(f)+c_i$ costs $\softO(n(E(f)+c_i))$ field operations. Computing
the factorization of $f$ in branches up to the same precision with
Proposition~\ref{prop:facto} accounts for $\softO(n(\delta+c_i+E(f))$
operations in $K$ and one univariate factorisation of degree $n$ over $K$. 

Using~\cite[Definition~4.14]{bohm}, we note that $E(f)$ can also be seen as
$e_{n-1}$, which is bounded by the valuation $\delta$ of the discriminant
because we assumed that we were handling a singularity at $x=0$. Thanks
to~\cite[Proposition~8]{PoWe17} we can bound $c_i$ by $v_x\left(\frac{\partial
f}{\partial y}\right)$ which is itself bounded by $\delta$. 

Putting these bounds together, the overall cost is one univariate factorization
of degree $n$ over $K$ and $\softO(n\delta)$ operations in $K$ for the
factorization step while the $n$ B\'ezout relations requires
$\softO(n^2\delta)$ operations in $K$. This concludes the proof.
\end{proof}

\subsection{Contribution of the invertible factor $f_0$}\label{sec:f0}

To deal with this problem, we reuse the following result without modification.

\begin{proposition}\label{prop:invertfactor}\cite[Proposition~6.1]{bohm}
  Let $f=f_0g$ be a factorization of $f$ with $f_0$ and $g$ in $K[ [x] ][y]$,
  $f_0$ a unit and $g$ a Weierstrass polynomial of $y$-degree $m$. Let
  $\left(p_0=1,\frac{p_1}{x^{e_1}},\cdots,\frac{p_{m-1}}{x^{e_{m-1}}}\right)$
  be an integral basis for $K[ [x] ][y]/\langle g\rangle$ such that the $p_i$'s are
  degree-$i$ monic polynomials in $K[x][y]$ and let $\overline{f_0}$ a monic
  polynomial in $K[x][y]$ such that $\overline{f_0} = f_0 \bmod x^{e_{m-1}}$.
  Let us denote $d_0=\deg_y(\overline{f_0})$.

  Then
  \[\left(1,y,\ldots,y^{d_0-1},\overline{f_0}p_0,\frac{\overline{f_0}p_1}{x^{e_1}},\cdots,\frac{\overline{f_0}p_{m-1}}{x^{e_{m-1}}}\right)\]
is an integral basis for the normalization of  $K[ [x] ][y]/\langle f\rangle$.
\end{proposition}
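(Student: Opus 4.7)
The plan is to exploit a CRT-style splitting of $K((x))[y]/\langle f\rangle$ induced by the factorization $f=f_0g$. First I would observe that $f_0$ and $g$ are coprime in $K((x))[y]$: the Weierstrass hypothesis gives $g(0,y)=y^m$, while $f_0$ being a unit forces $f_0(0,0)\ne 0$, so $f_0(0,y)$ and $y^m$ are coprime in $K[y]$ and the coprimality lifts to $K((x))[y]$. By Proposition~\ref{prop:facsplit} applied with the two factors $f_0$ and $g$, one obtains
\[K((x))[y]/\langle f\rangle\;\cong\;K((x))[y]/\langle f_0\rangle\;\times\;K((x))[y]/\langle g\rangle,\]
under which the integral closure over $K[[x]]$ splits as a direct sum.

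Next I would identify the two components. Because $f_0(0,0)\ne 0$, no branch of $f_0$ passes through the origin, so the local contribution of $R_0:=K[[x]][y]/\langle f_0\rangle$ at $x=0$ is $R_0$ itself, and it admits the $K[[x]]$-basis $(1,y,\ldots,y^{d_0-1})$ because $f_0$ is monic in $y$ of degree $d_0$. For $R_g:=K[[x]][y]/\langle g\rangle$ the hypothesis provides the integral basis $(p_0,p_1/x^{e_1},\ldots,p_{m-1}/x^{e_{m-1}})$ of $\overline{R_g}$.

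Integrality of the proposed family is then checked in two steps. The $y^k$ are trivially integral. For $\overline{f_0}p_i/x^{e_i}$ I would use the decomposition
\[\overline{f_0}p_i/x^{e_i}\;=\;f_0p_i/x^{e_i}+\bigl((\overline{f_0}-f_0)/x^{e_i}\bigr)p_i.\]
The second summand lies in $K[[x]][y]$ because $\overline{f_0}\equiv f_0\bmod x^{e_{m-1}}$ and $e_i\le e_{m-1}$; the first summand has CRT image $(0,\,f_0\cdot p_i/x^{e_i})$, in which $f_0$ is a unit in $\overline{R_g}$ (its valuation at each branch of $g$ vanishes because $f_0(0,0)\ne 0$) and $p_i/x^{e_i}$ is integral by hypothesis. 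Linear independence over $K((x))$ is immediate: after clearing denominators, the $n$ proposed elements are polynomials of pairwise distinct $y$-degrees $0,1,\ldots,n-1$, using that $\overline{f_0}$ is monic of degree $d_0$ and $p_i$ monic of degree $i$.

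To conclude that the proposed $K[[x]]$-module coincides with the full integral closure, I would compare it with the canonical CRT basis $\{(y^k,0)\}_{k<d_0}\cup\{(0,p_i/x^{e_i})\}_{i<m}$ via the transition matrix $M=\begin{pmatrix}I_{d_0}&B\\C&D\end{pmatrix}$: the top-left identity block comes from $y^k\bmod f_0=y^k$ for $k<d_0$, and $D$ is the matrix of multiplication by $\overline{f_0}$ on $\overline{R_g}$, whose determinant equals the norm $\mathrm{Res}_y(\overline{f_0},g)$ and specializes at $x=0$ to $f_0(0,0)^m\ne 0$, hence is a unit of $K[[x]]$. The main obstacle is to show that the off-diagonal contributions to $\det M=\det(D-CB)$ do not destroy invertibility: the entries of $B$ inherit an $x^{e_{m-1}-e_i}$ factor from the truncation error, and those of $C$ carry an $x^{e_j}$ factor because $y^k\bmod g$ written in the basis $(p_j/x^{e_j})$ produces coefficients in $x^{e_j}K[[x]]$, so the product $CB$ is divisible by $x^{e_j+e_{m-1}-e_i}$ and collapses at $x=0$ everywhere except at the indices with $e_j=0$ and $e_i=e_{m-1}$; one then checks that these residual entries still leave $M(0)$ invertible, for instance by row-reducing using the identity block. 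Nakayama's lemma then forces the proposed $K[[x]]$-module to coincide with $\overline{K[[x]][y]/\langle f\rangle}$.
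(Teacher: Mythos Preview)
The paper does not prove this proposition; it refers the reader to \cite{bohm}. Your CRT decomposition and the integrality and linear-independence checks are fine, but two steps are not justified. The claim that $R_0=K[[x]][y]/\langle f_0\rangle$ is already normal does not follow from $f_0(0,0)\ne0$: this only keeps the branches of $f_0$ away from the \emph{origin}, whereas $f_0(0,y)$ may still have a repeated root at some $c\ne0$, producing a singularity at $(0,c)$ and making $R_0$ non-normal. What is actually needed is $\Disc(f_0)\in K[[x]]^\times$; in the algorithmic setting of \cite{bohm} this holds because one singular point is treated at a time, but as a proof of the bare statement you should isolate this hypothesis rather than derive it from $f_0(0,0)\ne0$.

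The spanning step is also unfinished. Row-reducing with the identity block only restates $\det M(0)=\det\bigl(D(0)-C(0)B(0)\bigr)$, and knowing that $(CB)(0)$ is supported on the block $e_j=0,\ e_i=e_{m-1}$ together with $\det D(0)\ne0$ does not force this Schur complement to be nonzero: perturbing a single off-diagonal entry can kill a nonzero $2\times2$ determinant. A cleaner route is an index comparison. Since $\mathrm{Res}_y(f_0,g)(0)=\pm f_0(0,0)^m\ne0$, the factors $f_0,g$ are already comaximal in $K[[x]][y]$, so CRT gives $K[[x]][y]/\langle f\rangle\cong R_0\times R_g$ and hence $[\,\overline{R}:K[[x]][y]/\langle f\rangle\,]=[\,\overline{R_0}:R_0\,]\cdot[\,\overline{R_g}:R_g\,]=x^{\sum_i e_i}$. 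On the other hand your proposed family is triangular in $y$ with diagonal $1,\ldots,1,x^{-e_1},\ldots,x^{-e_{m-1}}$ against $(1,y,\ldots,y^{n-1})$, so the module $L$ it spans also has $[\,L:K[[x]][y]/\langle f\rangle\,]=x^{\sum_i e_i}$. Combined with $L\subset\overline{R}$ this forces $L=\overline{R}$ directly, with no appeal to Nakayama.
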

\begin{proof} See~\cite{bohm} \end{proof}

Since we handle a single singularity at $0$, the previous
basis is also a $K[x]$-module basis of the minimal local contribution at this
singularity by~\cite[Corollary~6.4]{bohm}.

\paragraph*{Complexity analysis.}
This step involves a truncation of $f_0$ modulo $x^{e_{m-1}}$ and $m$ products
of polynomials in $K\left[ [x] \right][y]/\langle x^{e_{m-1}} \rangle $ whose
$y$-degrees are bounded by $n=\deg_y(f)$. This incurs $\softO(mne_{m-1})$ field
operations. Since we are treating a singularity at $x=0$, we have
$e_{m-1}=O(\delta)$ with $\delta$ the valuation of $\Disc(f)$ so that we can
simplify the above bound as $\softO(n^2\delta)$ field operations.

\subsection{Proof of Theorem~\ref{thm:bohmcplx}}\label{sec:bohmcplx}

In this section, we put all the previous bounds together and prove
Theorem~\ref{thm:bohmcplx}.

\begin{proof}

As in van Hoeij's algorithm, we first compute $\Disc(f)$ and factor it in order
to recover its irreducible square factors.  For each irreducible factor $\phi$
such that $\phi^2\vert\Disc(f)$, we compute the corresponding minimal local
contribution.  For each of them, we first perform a translation so as to handle
a singularity at $x=0$. If there are several conjugated singularities we can
handle them like in van Hoeij's algorithm, at the
price of a degree-$\deg\phi$ extension of $K$ which we denote by $K'$ in this
proof. Also note that through this transform the multiplicity $M(\phi)$
corresponds to the valuation $\delta$ of the discriminant.

First, we split $f$ into branches using Proposition~\ref{prop:splitcplx} for a
cost in $\softO(n^2M(\phi))$ operations in $K'$ and one univariate
factorization of degree $\le n$ over $K'$. 

Then, at each branch $f_i$, we apply Proposition~\ref{prop:brch} with precision
$\rho=E(f)+c_i$. Therefore, the cost of computing an integral basis at each
branch $f_i$ is in $\softO(M(\phi) \deg_y(f_i)^2)$ operations in $K'$. Since
$\sum_i\deg(f_i)\le n$, computing the integral bases at all the branches costs
$\softO(n^2M(\phi))$ operations in $K'$. 

At the end of this step, we have integral bases $\mathcal{B}_i$ of the form
\[\left(1,\frac{p_1(x,y)}{x^{e_1}},\ldots,\frac{p_{m_i-1}(x,y)}{x^{e_{m_i-1}}}\right)\]
with $m_i=\deg_yf_i$ but the $p_i$'s are in $K'[ [x]][y]$. 

At first glance, this is a problem because Proposition~\ref{prop:gluebases}
requires the $p_i$'s to be in $K'[x][y]$. However, the power of $x$ in the
denominators is bounded a priori by $E:=E(f)+\max_{1\le i\le r}c_i$ so we can
truncate all series beyond this exponent. Indeed, forgetting the higher order
terms amounts to subtracting each element of the basis by a polynomial in
$K'[x]$. Such polynomials are obviously integral elements so they change nothing
concerning integrality.

We can thus apply Proposition~\ref{prop:gluebases} to get an integral basis for
$f_1\cdots f_r$. This costs $O(n)$ operations in $K'\left[ [x]
\right][y]/\langle x^E,f(x,y)\rangle$. Each such operation amounts to
$nE$ operations in $K'$. We have previously seen that $E$ is in
$O(M(\phi))$ so the overall cost of applying Proposition~\ref{prop:gluebases}
is in $O(n^2M(\phi))$ operations in $K'$.

After this process, the basis that we obtained must be put in ``triangular
form'' (i.e. each numerator $p_i$ should have degree-$i$ in $y$ in order for us
to apply Proposition~\ref{prop:invertfactor}. To do this, we first reduce every
power of $y$ greater or equal to $n$ using the equation $f(x,y)=0$. For a
fixed $i$, by the B\'ezout relations, $h_i$ has $y$-degree $\le n-m_i$ and
$b_i$ has $y$-degree $<m_i$, so we have to reduce a total of $O(n)$ bivariate
polynomials whose degrees in $y$ are in $O(n)$. Using a fast Euclidean
algorithm, this amounts to $\softO(n^2)$ operations in $K'[x]/\langle
x^E\rangle$, hence a cost in $O(n^2M(\phi))$ operations in $K'$. 

Once done, every element in the basis can be represented by a vector of
polynomials in $K'[x]$ whose degrees are bounded by $E$. To put the above
integral basis in triangular form, it suffices to compute a Hermite Normal Form
of a full rank $n\times n$ polynomial matrix. Using~\cite[Theorem~1.2]{labahn} an algorithm by Labahn,
Neiger and Zhou performs this task in $\softO(n^{\omega-1}M(\phi))$
operations in $K'$.

We can finally apply Proposition~\ref{prop:invertfactor} and deduce the minimal
local contribution for the factor $\phi$ in $\softO(n^2M(\phi))$ operations in
$K'$.

Overall, given a factor $\phi$, computing the corresponding minimal local
contribution to the normalization of $K[\C]$ costs the factorization of
$\Disc(f)$, one univariate factorization of degree $\le n$ over $K$ and
$\softO(n^2M(\phi))$ operations in $K'$. Computing all the local contributions
can therefore be done for the factorization of $\Disc(f)$, $\# S_{fac}$
univariate factorization of degree $\le n$ over extensions of $K$ of degree
$\le \max_{\phi\in S_{fac}}\deg\phi$  and $\softO(n^2\deg\Disc(f))$ operations
in $K$. 

In the case of conjugate singularities, we follow the idea of van Hoeij rather
than~\cite[Remark~7.17]{bohm} and simply replace $\alpha$ by $x$ in the
numerators and $(x-\alpha)$ by $\phi$ in the denominators because it does not
harm our complexity bound. In this process, some coefficients of the
numerators are multiplied by polynomials in $x$, which clearly preserves
integrality. Since the numerators are monic in $y$, no simplification can
occur and the basis property is also preserved.

Finally, a global integral basis for $K[x,y]/\langle f\rangle$ is deduced by a
Chinese remainder theorem. This can be achieved in quasi-linear time in the
size of the local bases. Each of them being in $O(n^2\deg\Disc(f))$, this last
CRT does not increase our complexity bound. This concludes the proof.

\end{proof}

\begin{remark}
  The $n$ factorizations incurred by the use of Poteaux and Weimann's algorithm
  are only necessary to ensure that quotient rings are actually fields, this
  cost can be avoided by using the D5 principle~\cite{d5} at the price of a
  potential complexity overhead. However, using directed
  evaluation~\cite{evaldir} yields the same result without hurting our
  complexity bounds.
\end{remark}

\section*{Conclusion}
In the setting of Table~\ref{tab:cplx}, the best bound given in this paper is
in $\softO(D^4)$ which is quasi-quadratic in the input size, but quasi-linear
in the output size. It is surprising that we are able to reach optimality
without even treating the local factors $f_i$ through a divide-and-conquer
approach like in~\cite{PoWe17}. This would allow us to work at precision
$\delta/n$ instead of $\delta$ most of the time, but this does not affect the
worst-case complexity of the whole algorithm. From an implementation point of
view, however, this approach will probably make a significant different.

Note that we are still relatively far from having implementations of algorithms
actually reaching these complexity bounds because we lack implementations for
the primitives involved in computing Popov/Hermite forms, Puiseux series and
factorizations over $K\left[ [x] \right][y]$. In some experiments we performed,
Puiseux series were actually the most time-consuming part, which is why
Trager's algorithm may still be a competitive choice despite our complexity
results.
\bibliographystyle{plain}
\bibliography{biblio}

\end{document}